\crefname{thm}{Theorem}{Theorems}
\crefname{cor}{Corollary}{Corollaries}
\crefname{lem}{Lemma}{Lemmas}
\crefname{prop}{Proposition}{Propositions}
\crefname{asm}{Asumption}{Asumptions}
\crefname{defi}{Definition}{Definitions}
\crefname{rem}{Remark}{Remarks}
\crefname{rems}{Remarks}{Remarks}
\crefname{exa}{Example}{Examples}
\crefname{exas}{Examples}{Examples}
\crefname{conj}{Conjecture}{Conjectures}
\crefname{prob}{Problem}{Problems}
\crefname{oprob}{Open Problem}{Open Problems}
\crefname{algo}{Algorithm}{Algorithms}
\crefname{obs}{Observation}{Observations}
\crefname{cond}{Condition}{Conditions}
\crefname{fact}{Fact}{Facts}
\NewDocumentCommand{\ignore}{ m }{}
\newcommand{\mysf}[1]{{\textit{#1}}}
\definecolor{darkgray}{rgb}{0.31,0.31,0.33}
\definecolor[named]{lipicsGray}{rgb}{0.31,0.31,0.33}
\definecolor[named]{lipicsBulletGray}{rgb}{0.60,0.60,0.61}
\definecolor[named]{lipicsLineGray}{rgb}{0.51,0.50,0.52}
\definecolor[named]{lipicsLightGray}{rgb}{0.85,0.85,0.86}
\definecolor[named]{lipicsYellow}{rgb}{0.99,0.78,0.07}
\definecolor{cA}{HTML}{ED220D}
\definecolor{cB}{HTML}{F5C847}
\definecolor{cC}{HTML}{60D937}
\definecolor{cD}{HTML}{00A1FF}
\definecolor{cDark}{HTML}{333333}
\definecolor{lipicsLightGrayCopy}{rgb}{0.85,0.85,0.86}
\colorlet{cGrayLight}{lipicsLightGrayCopy!30!white}
\colorlet{cGrayDark}{lipicsLightGrayCopy!30!black}
\colorlet{cALight}{cA!35!white}
\colorlet{cBLight}{cB!35!white}
\colorlet{cCLight}{cC!35!white}
\colorlet{cDLight}{cD!35!white}
\colorlet{cADark}{cA!75!black}
\colorlet{cBDark}{cB!75!black}
\colorlet{cCDark}{cC!75!black}
\colorlet{cDDark}{cD!75!black}
\colorlet{cADarker}{cA!70!black}
\colorlet{cBDarker}{cB!70!black}
\colorlet{cCDarker}{cC!70!black}
\colorlet{cDDarker}{cD!70!black}
\definecolor{lipicsYellowCopy}{rgb}{0.99,0.78,0.07}
\colorlet{lipicsYellowLight}{lipicsYellowCopy!50!white}
\colorlet{lipicsYellowLighter}{lipicsYellowCopy!30!white}
\newif\ifdebuggingliterals
\definecolor{literal-color}{HTML}{109905}
\definecolor{function-color}{HTML}{00786a}
\definecolor{object-color}{HTML}{78004e}
\definecolor{automata-class-color}{HTML}{785e00}
\NewDocumentCommand{\mathExpr}{ m }
{#1}
\NewDocumentCommand{\mathFunction}{ m O{} O{} O{a} m }%
{\mathExpr{\normalfont\mysf{\ifdebuggingliterals\color{function-color}\fi#1}_{#2}^{#3}
    \if\relax\detokenize{#5}\relax\else{\ifx#4a\bracketInner*{#5}\else\bracketInner[#4]{#5}\fi}\fi%
}}
\NewDocumentCommand{\mathObject}{ m O{} O{} O{a} m }%
{\mathExpr{\normalfont\textsc{\textsl{{{\ifdebuggingliterals\color{object-color}\fi#1}}}}_{#2}^{#3}\if\relax\detokenize{#5}\else{\ifx#4a\bracketInner*{#5}\else\bracketInner[#4]{#5}\fi}\fi}}
\newcommand{\suchthat}{\suchthatSymbol\PackageWarning{RadekTypesetting}{Command suchthat used outside of matching PairedDelimiter was used on input line \the\inputlineno.}}
\newcommand\suchthatSymbol[1][]{\nonscript\:#1\vert\allowbreak\nonscript\:\mathopen{}}
\renewcommand{\vec}[1]{\boldsymbol{#1}}
\DeclareSymbolFont{stmry}{U}{stmry}{m}{n}
\DeclareMathSymbol\shortrightarrow\mathrel{stmry}{"01}
\DeclareMathDelimiter\llbracket{\mathopen}{stmry}{"4A}
{stmry}{"71}
\DeclareMathDelimiter\rrbracket{\mathclose}{stmry}{"4B}
{stmry}{"79}
\DeclareFontFamily{OT1}{lmr}{}
\DeclareFontShape{OT1}{lmr}{m}{scsl}%
{<-> rm-lmcsco10}{}
\DeclareFontShape{OT1}{cmr}{m}{scit}{<->ssub*lmr/m/scsl}{}
\DeclareFontShape{OT1}{cmr}{m}{scsl}{<->ssub*lmr/m/scsl}{}
\DeclareFontFamily{OT1}{lmtt}{}
\DeclareFontShape{OT1}{lmtt}{sb}{n}
     {<-8.5>   rm-lmtt8     <8.5-9.5> rm-lmtt9
      <9.5-11> rm-lmtt10    <11->     rm-lmtt12
      }{}
\DeclareFontShape{OT1}{lmtt}{b}{n}
     {<-> rm-lmtk10}{}
\DeclareFontShape{OT1}{lmtt}{bx}{n}
     {<->ssub*lmtt/b/n}{}
\DeclareFontFamily{TS1}{cmtt}{}
\DeclareFontShape{TS1}{cmtt}{bx}{n}{<->ssub*lmtt/b/n}{}
\DeclarePairedDelimiterX{\setInner}[1]\{\}{\renewcommand\suchthat{\suchthatSymbol[\delimsize]}#1}
\NewDocumentCommand{\set}{ O{a} m }{\ifx#1a\setInner*{#2}\else\setInner[#1]{#2}\fi}
\NewDocumentCommand{\bracket}{ O{a} m }{\ifx#1a\bracketInner*{#2}\else\bracketInner[#1]{#2}\fi}
\DeclarePairedDelimiter {\size}          \lvert\rvert
\DeclarePairedDelimiter {\myMatrix}      []
\DeclarePairedDelimiter {\squareBracket}      []
\DeclarePairedDelimiter {\sem}           \llbracket\rrbracket
\DeclarePairedDelimiter {\tuple}         ()
\DeclarePairedDelimiter {\bracketInner}  ()
\newcommand{\hiddenset}[1]{#1}
\protected\def\verythinspace{%
    \ifmmode\mskip0.5\thinmuskip\else\ifhmode\kern0.08334em\fi\fi%
}
\newcommand{\argumentDot}{\verythinspace\cdot\verythinspace}
\protected\def\verythinspace{%
    \ifmmode\mskip0.5\thinmuskip\else\ifhmode\kern0.08334em\fi\fi%
}
\let\forallSymbol=\forall
\let\existsSymbol=\exists
\let\limplies=\rightarrow
\RenewDocumentCommand{\forall}{m t.}{%
        {{\forallSymbol}\verythinspace\IfValueT{#1}{#1\IfBooleanT{#2}{.\,}\;}}%
}
\RenewDocumentCommand{\exists}{m t.}{%
        {{\existsSymbol}\verythinspace\IfValueT{#1}{#1\IfBooleanT{#2}{.\,}\;}}%
}
\NewDocumentCommand{\lang}{}{\mathFunction{\ensuremath{{\mathcal{L}}}}}
\NewDocumentCommand{\langUniv}{ O{} }{\mathFunction{${\mathcal{L}}$}[#1][\forall\!]}
\NewDocumentCommand{\langForall}{ O{\I} }{\mathFunction{$\rho$}[#1][\forallSymbol\!]}
\NewDocumentCommand{\rel}{}{\mathFunction{\ensuremath{{\mathcal{R}}}}}
\NewDocumentCommand{\N}{}{\bN}
\NewDocumentCommand{\I}{}{\cI}
\NewDocumentCommand{\F}{}{\mathExpr{\Phi}}
\NewDocumentCommand{\colours}{}{\mathFunction{colours}}
\NewDocumentCommand{\tTop}{}{\mathFunction{top}}
\NewDocumentCommand{\tRight}{}{\mathFunction{right}}
\NewDocumentCommand{\tBottom}{}{\mathFunction{bottom}}
\NewDocumentCommand{\tLeft}{}{\mathFunction{left}}
\NewDocumentCommand{\BigO}{}{\mathFunction{$O$}}
\NewDocumentCommand{\tilingProblem}{}{\mathObject{CorridorTiling}{}}
\NewDocumentCommand{\enc}{ O{\I} }{\mathFunction{enc}[#1]}
\NewDocumentCommand{\encRow}{ O{\I} }{\mathFunction{encRow}[#1][]}
\NewDocumentCommand{\encCell}{ O{\I} }{\mathFunction{encCell}[#1][]}
\NewDocumentCommand{\regexToAutomaton}{ }{\mathFunction{$\mathcal{A}$}}
\NewDocumentCommand{\automatonToRelation}{ m }{\langToRel{\lang{#1}}}
\NewDocumentCommand{\relToLang}{ }{\mathFunction{Rel2Lang}}
\NewDocumentCommand{\langToRel}{ }{\mathFunction{Lang2Rel}}
\NewDocumentCommand{\encLang}{ O{\I} }{\mathExpr{L_{#1}}}
\NewDocumentCommand{\valEncLang}{ O{\I} }{\mathObject{ValidEnc}[#1]{}}
\NewDocumentCommand{\condLang}{ m O{\I} }{\mathObject{Cond}[#2][#1]{}}
\NewDocumentCommand{\allSufAut}{ }{\mathObject{AllSuf}}
\NewDocumentCommand{\allSufLang}{ }{\mathObject{$L$}[\forallSymbol\text{suf}]}
\NewDocumentCommand{\rhoProj}{ O{a} m }{\mathFunction{$\rho$}[\I][][#1]{#2}}
\NewDocumentCommand{\rhoInv}{ O{a} m }{\mathFunction{$\rho$}[\I][-1][#1]{#2}}
\NewDocumentCommand{\inProj}{ O{a} m }{\mathFunction{$\psi$}[\inSub][][#1]{#2}}
\NewDocumentCommand{\outProj}{ O{a} m }{\mathFunction{$\psi$}[\outSub][][#1]{#2}}
\NewDocumentCommand{\proj}{ O{} O{a} m }{\mathFunction{$\pi$}[#1][\existsSymbol][#2]{#3}}
\NewDocumentCommand{\univProj}{ O{} O{a} m }{\mathFunction{$\pi$}[#1][\forallSymbol][#2]{#3}}
\NewDocumentCommand{\biglanguagesum}{}{\sum}
\NewDocumentCommand{\biglanguageconcat}{}{\prod}
\NewDocumentCommand{\setComplement}{ m }{\overline{#1}}
\NewDocumentCommand{\concat}{}{\,}
\NewDocumentCommand{\languagesum}{}{+}
\newcommand*{\coloncoloneqqq}{%
\ensuremath{%
    \mathrel{%
        \@center@colon
        \colonsep
        \@center@colon
        \doublecolonsep
        {\equiv}
    }%
}%
}
\newcommand{\fixed@sra}{$\vrule height 2\fontdimen22\textfont2 width 0pt\shortrightarrow$}
\newcommand{\shortarrow}[1]{%
    \mathrel{\text{\rotatebox[origin=c]{\numexpr#1}{\fixed@sra}}}
}
\newcommand{\toplefttile}{t^{\kern-2pt\shortarrow{120}\kern-1pt}}
\newcommand{\bottomrighttile}{t^{\kern-2pt\shortarrow{300}\kern-1pt}}
\DeclareExpandableDocumentCommand{\IfNoValueOrEmptyTF}{mmm}
 {
  \IfNoValueTF{#1}{#2}
   {
    \tl_if_empty:nTF {#1} {#2} {#3}
   }
 }
\NewDocumentCommand{\NewPredicate}{m m E{^_}{{}{}} }%
{%
    \expandafter\NewDocumentCommand\expandafter{\csname p#1\endcsname}%
    {O{a} E{^_}{{#3}{#4}} m }%
    {\mathExpr{\IfNoValueOrEmptyTF{##2}{{\mathrm{#2}}_{##3}}{{\mathrm{#2}}^{##2}_{##3}}\if\relax\detokenize{##4}\relax\else{\ifx##1a\bracketInner*{##4}\else\ifx##1b({##4})\else\bracketInner[##1]{##4}\fi\fi}\fi}}%
}
\NewDocumentCommand{\NewVariable}{m}%
{%
    \expandafter\NewDocumentCommand\expandafter{\csname v#1\endcsname}%
    {}{\mathExpr{{\mathit{#1}}}}%
}
\newcommand{\abbrev}[1]{#1\@\xspace}
\newcommand{\eg}  {e.g.\@\xspace}
\newcommand{\ie}  {i.e.\@\xspace}
\newcommand{\Buchi}{Büchi\xspace}
\newcommand{\proofCase}[1]{\par\smallskip\noindent{\mbox{\textsc{{#1.}}}}\ }
\NewDocumentCommand{\inSub}{}{\text{in}}
\NewDocumentCommand{\outSub}{}{\text{out}}
\NewDocumentCommand{\ini}{}{\text{I}}
\NewDocumentCommand{\fin}{}{\text{F}}
\NewDocumentCommand{\finite}{}{\text{fin}}
\NewDocumentCommand{\increment}{}{\text{inc}}
\NewDocumentCommand{\inc}{}{\text{\tikzxmark}}
\NewDocumentCommand{\corrrr}{}{\text{\tikzcmark}}
\theoremstyle{definition}
\newtheorem{cond}{Condition}
\newcommand{\logspace}   {\textsc{LogSpace}\xspace}
\newcommand{\expexpspace}{\textsc{2-ExpSpace}\xspace}
\newcommand{\expspace}   {\textsc{ExpSpace}\xspace}
\newcommand{\NFA}       {{\normalfont\textsc{nfa}}\xspace}
\newcommand{\cLetter}[1]{\mathExpr{\mathcal{#1}}}
\newcommand{\cA}{\cLetter{A}}
\newcommand{\cB}{\cLetter{B}}
\newcommand{\cC}{\cLetter{C}}
\newcommand{\cE}{\cLetter{E}}
\newcommand{\cF}{\cLetter{F}}
\newcommand{\cG}{\cLetter{G}}
\newcommand{\cH}{\cLetter{H}}
\newcommand{\cI}{\cLetter{I}}
\newcommand{\cT}{\cLetter{T}}
\newcommand{\bLetter}[1]{\mathExpr{\mathbb{#1}}}
\newcommand{\bN}{\bLetter{N}}
\newcommand{\tLetter}[1]{\mathExpr{\mathtt{#1}}}
\newcommand{\tl}{\tLetter{l}}
\tikzset{
  thick/.style=      {line width=0.7pt},
}
\tikzstyle{my-line} = [
\tikzstyle{my-arrow} = [my-line,-{Triangle}]
\tikzstyle{my-arrow-both-ends} = [my-line, {Triangle}-{Triangle}]
\tikzstyle{shortened} = [shorten >=4pt,shorten <=4pt]
\tikzstyle{separated} = [shortened]
\tikzstyle{my-line-thin} = [
\newcommand{\myEdge}[4]{
  \ifthenelse{\equal{#3}{}}{
    \draw[my-line, #1] 
      (#2) to
      (#4);
  }{
    \draw[my-line, #1] 
      (#2) to 
        node[EdgeLabel] {{\strut\small#3}} 
      (#4);
  }
}
\tikzstyle{state} = [
\tikzstyle{my-state} = [
\tikzstyle{EdgeLabel} = [
\tikzstyle{transition} = [
\definecolor{accentGray}{RGB}{71, 71, 75}
\definecolor{accent}{RGB}{252, 199, 18}
\tikzstyle{letter} = [inner sep=0.8mm, anchor=north west, yshift=-2mm]
\tikzstyle{bgLine} = [draw=accentGray, line width=0.2mm]
\tikzstyle{box} = [bgLine, fill=white, inner sep=0.8mm, anchor=north west, outer sep=0, text=black]
\tikzstyle{dot} = [fill=black, inner sep=0.8mm, outer sep=0]
\tikzstyle{axis} = [line width=0.5mm, draw=accent, -{Triangle[scale=1]}]
\newcommand{\tikzxmark}{%
  \tikz[scale=1.0,baseline=0.2ex] {
    \draw[line width=0.7,line cap=round] (0,0) to [bend left=6] (1ex,1ex);
    \draw[line width=0.7,line cap=round] (0.2ex,0.95ex) to [bend right=3] (0.8ex,0.05ex);
  }}
\newcommand{\tikzcmark}{%
  \tikz[scale=1.0,baseline=0.2ex] {
    \draw[line width=0.7,line cap=round] (0.25ex,0) to [bend left=10] (1ex,1ex);
    \draw[line width=0.8,line cap=round] (0,0.35ex) to [bend right=1] (0.23ex,0);
  }}
\newcommand{\tikzmarkMath}[2]{%
  \tikz[%
    remember picture,
    baseline = (#1.base),
    inner sep = 0pt,
    outer sep = 0pt, 
  ] \node (#1) {$\m@th\displaystyle #2$};%
}
\tikzstyle{tile label}=[]
\newcommand{\tileInner}[4]{
  \begin{scope}[scale=0.34]
    \def\mycolours{{"cGrayLight","cDark","cBLight","cCLight","cALight","cDLight"}}
    \def\mytextcolours{{"black","white","black","black","black","black"}}
    \foreach \a/\col in {90/#1,0/#2,270/#3,180/#4} {
      \pgfmathsetmacro{\colour}{\mycolours[\col]}
      \pgfmathsetmacro{\textcolour}{\mytextcolours[\col]}
      \fill[\colour,rotate=\a] (0,0) -- (1,-1) -- (1,1) -- cycle;
      \path (0,0) +(\a:0.63) node[tile label,inner sep=0] {\scriptsize\color{\textcolour}$\col$};
    }
    \draw (1,1) -- (-1,-1) (-1,1) -- (1,-1) (1,1) rectangle (-1,-1);
  \end{scope}
}
\keywords{automatic structures, universal projection, quantifier elimination, B\"uchi arithmetic}
\begin{document}
    \title[Universal quantification makes automatic structures hard to decide]{Universal quantification makes automatic structures hard to decide}

    \author[C.\ Haase]{Christoph Haase\lmcsorcid{https://orcid.org/0000-0002-5452-936X}}[a]

    \author[R.\ Piórkowski]{Radosław Piórkowski\lmcsorcid{https://orcid.org/0000-0002-9643-182X}}[a,b]

    \address{Department of Computer Science, University of Oxford, Oxford, United Kingdom}	

    \address{Department of Computer Science, University of Warwick, Coventry, United Kingdom}	

    \begin{abstract}
      Automatic structures are first-order structures whose universe and relations
      can be represented as regular languages. It follows from the
      standard closure properties of regular languages that the
      first-order theory of an automatic structure is decidable. While
      existential quantifiers can be eliminated in linear time by
      application of a homomorphism, universal quantifiers are
      commonly eliminated via the identity $\forall{x}. \Phi
      \equiv \neg (\exists{x}. \neg \Phi)$. If $\Phi$ is
      represented in the standard way as an NFA, a priori this
      approach results in a doubly exponential blow-up. However, the
      recent literature has shown that there are classes of automatic
      structures for which universal quantifiers can be eliminated by
      different means without this blow-up by treating them as
      first-class citizens and not resorting to double
      complementation. While existing lower bounds for some classes of
      automatic structures show that a singly exponential blow-up is
      unavoidable when eliminating a universal quantifier, it is not
      known whether there may be better approaches that avoid the
      na\"ive doubly exponential blow-up, perhaps at least in
      restricted settings.

      In this paper, we answer this question negatively and show that
      there is a family of NFA representing automatic relations for
      which the minimal NFA recognising the language after eliminating
      a single universal quantifier is doubly exponential, and
      deciding whether this language is empty is \expspace-complete.

      The techniques underlying our \expspace lower bound
      further enable us to establish new lower bounds for
      some fragments of B\"uchi arithmetic with a fixed 
      number of quantifier alternations.
    \end{abstract}
    \typeout{HERE0}
    \maketitle

    \section{Introduction}
    
Quantifier elimination is a standard technique to decide logical
theories. A logical theory $\cT$ admits quantifier elimination whenever
for every quantifier free conjunction of literals
$\Phi(x,y_1,\ldots,y_n)$ of $\cT$ there is a quantifier free formula
$\Psi(y_1,\ldots,y_n)$ such that $\cT \models \exists{x}. \Phi
\leftrightarrow \Psi$. Universal quantifiers can then be eliminated
simply by applying the duality $\forall{x}. \Phi \equiv \neg (\exists{x}. \neg\Phi)$. If the formula $\Psi$ above is effectively computable
then $\cT$ is decidable provided that its quantifier-free fragment is decidable. 
For quantifier elimination procedures, the
computationally most expensive step is the elimination of an
existential quantifier, since negating a formula can be performed on a
syntactic level.

Automatic structures~\cite{Hod82,KN95,BG00} are a family of
first-order structures whose corresponding first-order theory can be
decided using automata-theoretic methods, as an alternative approach
to syntactic quantifier elimination. In their simplest variant,
automatic structures are relational first-order structures whose
universe is isomorphic to a regular language $L\subseteq \Sigma^*$
over some alphabet $\Sigma$, and whose $n$-ary relations are
interpreted as regular languages over $(\Sigma^n)^*$. It follows that
the set of all satisfying assignments of a quantifier-free formula
$\Phi(x_1,\ldots,x_{m+1})$ can be obtained as the language
$\lang{\cA}\subseteq (\Sigma^{m+1})^*$ of some finite-state automaton
$\cA$. In this setting, eliminating existential quantifiers is
easy. In order to obtain a finite-state automaton whose language
encodes the satisfying assignments to $\exists{x_{m+1}}. \Phi$,
it suffices to apply the homomorphism induced by the mapping $h\colon
(\Sigma^{m+1}) \to (\Sigma^{m})$ such that
$h(u_1,\ldots,u_{m+1})\coloneqq (u_1,\ldots,u_m)$ to $\lang{\cA}$. This
can be performed in linear time, even when $\cA$ is
non-deterministic. However, if $\cA$ is non-deterministic then
computing a finite-state automaton whose language encodes the
complement of $\Phi$ is computationally difficult and may lead to an
automaton with $2^{\Omega(\size \cA)}$ states. In particular, due
to double complementation, eliminating a universal quantifier
may \emph{a priori} lead to an automaton with $2^{2^{\Omega(\size{\cA)}}}$
states. Notable examples of automatic structures are Presburger
arithmetic~\cite{Pres29}, the first-order theory of the structure
$\langle \N,0,1,+,=\rangle$, and its extension B\"uchi
arithmetic~\cite{B60,Bru85,BHMV94}. Tool suites such
as \textsc{Lash}~\cite{LASH}, \textsc{Tapas}~\cite{LP09}
and \textsc{Walnut}~\cite{Mous16} are based on the automata-theoretic
approach and have successfully been used to decide challenging
instances of Presburger arithmetic and B\"uchi arithmetic from various
application domains. Those tools eliminate universal quantifiers via
double complementation.

Yet another approach to deciding Presburger arithmetic is based on
manipulating semi-linear sets~\cite{GS66,CHM22}, which are
generalisations of ultimately periodic sets to arbitrary tuples of
integers in $\N^d$. They are similar to automata-based methods in
terms of the computational difficulty of existential projection and
complementation: the former is easy whereas the latter is difficult.

For certain classes of automatic structures, it is possible to
avoid eliminating universal quantifiers via existential projection
and negation. For example, it was shown in~\cite{CH17} that deciding sentences of
quantified integer programming
$\exists{\bar{x}_1} \forall{\bar{x}_2}
\dots \exists{\bar{x}_n}. A \cdot \bar{x} \ge \bar{b}$ is complete for
the $n$th level of the polynomial hierarchy. The upper bound was
obtained by manipulating so-called hybrid linear sets, which
characterise the sets of integer solutions of systems of linear
equations $A \cdot \bar{x} \ge \bar{b}$. A key technique introduced
in~\cite{CH17} is called \emph{universal projection} and enables directly
eliminating universal quantifiers instead of resorting to double
complementation and existential projection. Given $S \subseteq
\N^{d+k}$, the universal projection of $S$ onto the first $d$
coordinates is defined as
\[
    \univProj[d]{S} \coloneqq \set[\big]{
        \bar{u} \in \N^d
        \suchthat (\bar{u},\bar{v}) \in S \text{ for all } \bar{v} \in \N^k }\,.
\]
It is shown in~\cite{CH17} that if $S$ is a hybrid linear set then
$\univProj[d]{S}$ is a hybrid linear set that can be obtained as a
finite intersection of existential projections of certain hybrid linear sets. Moreover, the growth of the
constants in the description of the hybrid linear set is only
polynomial. Neither syntactic quantifier elimination nor
automata-based methods are powerful enough to derive those tight upper
bounds for quantified integer programming.

Another example is a recent paper of Boigelot et al.~\cite{BFV23}
showing that, in an automata-theoretic approach for a fragment of
Presburger arithmetic with uninterpreted predicates, a universal
projection step can directly be carried out on the automata level
without complementation and only results in a singly exponential
blow-up.

Those positive algorithmic and structural results are specific to
Presburger arithmetic and leave open the option that it may be
possible to establish analogous results for general automatic
structures. The starting point of this paper is the question of
whether, given a non-deterministic finite automaton $\cA$ whose
language $\lang{\cA}\subseteq (\Sigma^{d+k})^*$ encodes the set of
solutions of some quantifier-free formula $\Phi$, there is a more
efficient way to eliminate a (block of) universally quantified
variable(s) than to first complement $\cA$, next to perform an
existential projection step, and finally to complement the resulting
automaton again, especially in the light of the results
of~\cite{CH17,CHM22}. Such a method would have direct consequences for
tools such as \textsc{Walnut} which perform the aforementioned
sequence of operations in order to eliminate universal quantifiers. In
particular, \textsc{Walnut} is not restricted to automata resulting
from formulas of linear arithmetic and allows users to directly
specify a finite-state automaton when desired.

For better or worse, however, as the main result of this paper, we
show that deciding whether the universal projection
$\univProj[d]{\lang{\cA}}$ of some regular language
$\lang{\cA} \subseteq \left(\Sigma^{d+k}\right)^*$ is empty is complete
for \expspace. In particular, the lower bound already holds for
$d=k=1$, meaning that, in general, even for fixed-variable fragments
of automatic structures, there is no algorithmically more efficient
way to eliminate a single universal quantifier than the na\"ive
one. The challenging part is to show the \expspace lower bound, which
requires an involved reduction from a tiling problem. This reduction
also enables us to show that there is a family $\big(\cA_n\big)_{n \in \N}$ of non-deterministic
finite automata such that $\size{\cA_n} = \BigO{n^4}$ and the smallest non-deterministic
finite automaton recognising the universal projection of
$\lang{\cA_n}$ has $\Omega\kern-1pt\left(2^{2^n}\right)$ states.

One of the most prominent automatic structures is B\"uchi arithmetic~\cite{B60,Bru85}. Given an integer $p\ge 2$, \emph{B\"uchi 
arithmetic of base $p$} is the first-order theory
of the structure $\langle \N,0,1,+, \pV_p{} \rangle$, where $\pV_p{}$ is a binary predicate such that $\pV_p{}(x,y)$ holds whenever $x$ is the largest power of $p$ dividing $y$ without remainder, i.e., $x=p^k$ for some $k\ge 0$, $x \mid y$ and $px \nmid y$. B\"uchi arithmetic
is a universal automatic structure in the following sense:
For any regular language $L \subseteq (\Sigma^n)^*$, there
is a formula \typeout{HEREA}$\Phi(\vec x)$\typeout{HEREB} of
B\"uchi arithmetic with an $\existsSymbol^*\forallSymbol^*$
quantifier prefix such that the set of satisfying
$\vec x \in \N^n$ is isomorphic to $L$~\cite{HR21}. The
existential fragment of B\"uchi arithmetic is NP-complete,\typeout{End}
and the full first-order theory is complete for \textsc{Tower}~\cite{GuepinH019} (see e.g.~\cite{Sch16} for a definition of \textsc{Tower}). While the computational complexity
of Presburger arithmetic with fixed quantifier alternation
prefixes is well 
understood~\cite{Haa14}, to the best of the authors' knowledge 
no (stronger) lower bounds are known when
generalizing to B\"uchi arithmetic.
A further contribution of this paper is to show that
B\"uchi arithmetic with an $\existsSymbol^*\forallSymbol^*\existsSymbol^*$ quantifier prefix is \expspace-hard, and it is \expexpspace-hard 
with a $\existsSymbol^*\forallSymbol^*\existsSymbol^*\forallSymbol^*$
quantifier prefix. Those lower bounds are obtained
by adapting the aforementioned $\expspace$ lower bound 
for universal projection emptiness.

    \section{Preliminaries}
        \subsection{Regular languages and their compositions}
        For a word $w = a_1 a_2 \cdots a_n \in \Sigma^*$, we write $w[i]$ to denote its $i$th letter $a_i$, and $w[i,j]$ to denote the infix $a_i a_{i+1} \cdots a_j$ ($i \le j$).
We write $\size w$ for the length of $w$.
A \emph{proper suffix} of $w$ is any infix $w[i,n]$ for some $1 < i \le n$.

\subparagraph*{Regular expressions}

A \emph{regular expression} over the alphabet $\Sigma$ is a term featuring Kleene star, concatenation and union operations, as well as $\emptyset$ and all symbols from $\Sigma$ as constants:
\begin{align*}
    \cE, \cE'
    &\coloncoloneqqq
        \cE^* \mid
        \cE \cdot \cE' \mid
        \cE + \cE' \mid
        \emptyset \mid
        a \text{ for every $a \in \Sigma$}
\end{align*}
For notational convenience, we also use sets of symbols $A \subseteq \Sigma$ as constants, and a $k$-fold concatenation $\cE^k$ for every $k \in \N$; we also drop the concatenation dot most of the time.
The language $\lang{\cE} \subseteq \Sigma^*$ is defined by structural induction, by defining
$\lang{\emptyset} \coloneqq \emptyset$ and
$\lang{a} \coloneqq \set{ a }$, and using the standard semantics of the three operations.
The class of languages definable by regular expressions is called \emph{regular languages}.
The size $\size \cE$ of a regular expression $\cE$ is defined recursively as $1$ plus the sizes of its subexpressions, where $\size a = \size \emptyset \coloneqq 1$.
For $\rho\colon \Sigma \to \Gamma$ and a regular expression $\cE$, $\rho(\cE)$ is a regular expression over $\Gamma$ obtained through substituting every constant $a \in \Sigma$ appearing in $\cE$ by $\rho(a)$.

\subparagraph*{Finite-state automata}
Regular languages can also be represented by \emph{non-deterministic
finite-state automata} (\NFA).  Such an automaton is a tuple $\cA
= \tuple{Q,\Sigma,\delta,Q_\ini,Q_\fin}$, where $Q$ is a finite
non-empty set of \emph{states}, $\Sigma$ is a finite \emph{alphabet},
$\delta \subseteq Q \times \Sigma \times Q$ is the \emph{transition
relation}, $Q_\ini \subseteq Q$ is the set of \emph{initial states},
and $Q_\fin \subseteq Q$ is the set of \emph{final states}.  A triple
$(p, a, q) \in Q \times \Sigma \times Q$ is called a \emph{transition}
and denoted as $p \xrightarrow{a} q$.  A \emph{run} of $\cA$ from a
state $q_0$ to a state $q_n$ ($n \in \N$) on a word $w=a_1 a_2 \cdots
a_n \in \Sigma^*$ is a finite sequence of transitions
$\bracket[\big]{q_{i-1} \xrightarrow{a_i} q_{i}}_{1 \le i \le n}$ such
that $q_{i-1} \xrightarrow{a_i} q_{i} \in \delta$ for every $i$.
A word $w \in \Sigma^*$ is \emph{accepted} by $\cA$ if
there exists a run of $\cA$ from some $q_\ini \in Q_\ini$ to $q_\fin \in Q_\fin$ over $w$.
The \emph{language} of $\cA$ is defined as $\lang{\cA} \coloneqq \set{ w\in \Sigma^* \suchthat w
\text{ is accepted by } \cA }$.
We define the size of $\cA$ as
%
    $\size \cA \coloneqq \size Q + \size \Sigma + \size \delta$.
%
Subsequently,
we will implicitly apply the well-known fact that the size 
of an \NFA accepting the complement of $\lang{\cA}$ is $\BigO{2^{\size{Q}} \cdot \size{\Sigma}}$,
and that it has $2^{\size{Q}}$ states.

Below we state, without proofs, a few folklore properties of \NFA:
\begin{fact}[\NFA closed under language union]
    For any \NFA $\cA, \cB$ over $\Gamma$ with
    states $Q_\cA$, $Q_\cB$, there exists an \NFA $\cA \oplus \cB$ with $\size{Q_\cA} + \size{Q_\cB}$ states such that $\lang{\cA \oplus \cB} = \lang{\cA} \cup \lang{\cB}$.
\end{fact}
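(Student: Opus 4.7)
The plan is to take $\cA \oplus \cB$ to be the \emph{disjoint union} of the two automata. Concretely, writing $\cA = \tuple{Q_\cA, \Gamma, \delta_\cA, Q_{\ini,\cA}, Q_{\fin,\cA}}$ and analogously for $\cB$, I would first relabel states so that $Q_\cA \cap Q_\cB = \emptyset$, and then set
$Q \coloneqq Q_\cA \cup Q_\cB$, $\delta \coloneqq \delta_\cA \cup \delta_\cB$, $Q_\ini \coloneqq Q_{\ini,\cA} \cup Q_{\ini,\cB}$, and $Q_\fin \coloneqq Q_{\fin,\cA} \cup Q_{\fin,\cB}$. No new transitions need to be added.

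For correctness, the key observation is that since the state sets are disjoint and $\delta$ contains only transitions of $\delta_\cA$ or $\delta_\cB$, every run of $\cA \oplus \cB$ stays entirely within $Q_\cA$ or entirely within $Q_\cB$, depending on which initial state it starts from. Consequently a word $w$ admits an accepting run in $\cA \oplus \cB$ iff it admits an accepting run in $\cA$ or in $\cB$, establishing $\lang{\cA \oplus \cB} = \lang{\cA} \cup \lang{\cB}$.

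For the size bound, let $n_\cA \coloneqq \size{Q_\cA}$ and $n_\cB \coloneqq \size{Q_\cB}$. Then $\size{\cA \oplus \cB} = (n_\cA + n_\cB) + (n_\cA + n_\cB)^2 \cdot \size \Gamma$. Expanding the square and using $2 n_\cA n_\cB \le n_\cA^2 + n_\cB^2$ yields $\size{\cA \oplus \cB} \le (n_\cA + n_\cB) + 2(n_\cA^2 + n_\cB^2)\cdot \size \Gamma \le 2\size\cA + 2\size\cB$, which is the desired $\BigO{\size \cA + \size \cB}$ bound.

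There is essentially no obstacle here: the only mildly nontrivial point is checking that the cross term $2 n_\cA n_\cB \size \Gamma$ arising from squaring $n_\cA + n_\cB$ is absorbed into $\size\cA + \size\cB$, which follows from the elementary inequality above. Since the alphabet is shared between $\cA$ and $\cB$, no blow-up from the alphabet factor occurs.
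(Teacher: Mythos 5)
Your proposal is correct: the disjoint-union construction is the canonical way to realise union of NFA languages, the correctness argument is sound (disjoint state sets force every run to stay in one component), and the size analysis correctly handles the cross term $2n_\cA n_\cB \size\Gamma$ via $2n_\cA n_\cB \le n_\cA^2 + n_\cB^2$ under the paper's definition $\size\cA = \size Q + \size Q^2 \cdot \size\Sigma$. The paper explicitly states this and the neighbouring closure facts ``without proofs'' as folklore, so there is no in-paper argument to compare against, but your proof is precisely the standard one the authors have in mind.
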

Given finite alphabets $\Sigma, \Gamma$, a \emph{homomorphism}
is a function
$\rho\colon \Sigma^* \to \Gamma^*$ such that $\rho(v\cdot w)=\rho(v)\cdot \rho(w)$ 
for all $v,w\in \Sigma^*$.
It follows that $\rho$ is fully defined by specifying $\rho(a)$ for all $a\in \Sigma$.
\begin{exa}
    Let $\Sigma=\set{a,b}$, $\Gamma=\set{x,y}$, and define $\rho$ such that $\rho(a)=xy$ and $\rho(b)=\epsilon$. Then
    $\rho(abba)=xyxy$ and $\rho^{-1}(\set{xyxy})=\lang{b^*ab^*ab^*}$.
\end{exa}
\begin{fact}[\NFA closed under inverse homomorphisms]\label{fact:inv-hom}
    For any \NFA $\cA=(Q,\Sigma,\delta,Q_\ini,Q_\fin)$ and a homomorphism $\rho\colon \Sigma^* \to \Gamma^*$, there exists an \NFA $\rho^{-1}(\cA)$ with 
    $\size{Q}$ states such that $\lang{\rho^{-1}(\cA)} = \rho^{-1}(\lang{\cA})$.
\end{fact}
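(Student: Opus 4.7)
The plan is to build $\rho^{-1}(\cA)$ directly from $\cA = \tuple{Q, \Gamma, \delta, Q_\ini, Q_\fin}$, reusing the same state set and the same sets of initial and final states, and equipping it with a fresh transition relation over the alphabet $\Sigma$. Concretely, for each letter $a \in \Sigma$ and each pair of states $(p, q) \in Q \times Q$, include the transition $p \xrightarrow{a} q$ in the new automaton precisely when $\cA$ has some run from $p$ to $q$ on the word $\rho(a) \in \Gamma^*$. Computing these reachability predicates is straightforward: simulate $\cA$ letter-by-letter on $\rho(a)$, propagating the reachable-state subset. Note that when $\rho(a) = \varepsilon$ this collapses to $p = q$, so erasing homomorphisms are handled uniformly.

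The correctness statement I would prove is the following invariant, by induction on $|w|$: for any $p, q \in Q$ and $w \in \Sigma^*$, the automaton $\rho^{-1}(\cA)$ admits a run from $p$ to $q$ on $w$ if and only if $\cA$ admits a run from $p$ to $q$ on $\rho(w)$. The base case $w = \varepsilon$ is immediate, and the inductive step $w = w' a$ combines the definition of the new transitions for the last letter with the fact that $\rho$ is a homomorphism, so $\rho(w' a) = \rho(w')\rho(a)$. Specialising to $p \in Q_\ini$ and $q \in Q_\fin$ yields $w \in \lang{\rho^{-1}(\cA)}$ iff $\rho(w) \in \lang{\cA}$, that is, $\lang{\rho^{-1}(\cA)} = \rho^{-1}(\lang{\cA})$ as required.

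For the size bound, the new automaton has exactly $\size Q$ states and at most $\size \Sigma \cdot \size Q^2$ transitions, so under the size convention $\size \cA = \size Q + \size Q^2 \cdot \size \Sigma$ already in force in the excerpt, the resulting automaton has size $\BigO{\size Q + \size Q^2 \cdot \size \Sigma}$, matching the shape of $\size \cA$ (with the appropriate alphabet). There is no genuine obstacle in this proof; the only mild subtlety is ensuring the transition definition handles $\rho(a) = \varepsilon$ correctly, which is why it is stated in terms of arbitrary $\cA$-runs from $p$ to $q$ on $\rho(a)$ rather than nonempty ones.
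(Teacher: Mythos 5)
The paper states this as one of several folklore facts ``without proofs,'' so there is no proof in the paper to compare against. Your construction is the standard one for inverse homomorphic images: keep $Q$, $Q_\ini$, $Q_\fin$, and add $p \xrightarrow{a} q$ over $\Sigma$ precisely when $\cA$ has a run from $p$ to $q$ on $\rho(a) \in \Gamma^*$. Your inductive invariant (runs of the new automaton on $w$ correspond exactly to runs of $\cA$ on $\rho(w)$) is correct, and you correctly handle erasing letters by phrasing the transition condition in terms of arbitrary runs, so that $\rho(a)=\varepsilon$ collapses to $p=q$. The one point worth being precise about is the size bound: under the paper's convention, your automaton has size $|Q| + |Q|^2 \cdot |\Sigma|$ while $|\cA| = |Q| + |Q|^2 \cdot |\Gamma|$, so the claimed $O(|\cA|)$ is literally tight only when $|\Sigma| = O(|\Gamma|)$. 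You flag this yourself (``with the appropriate alphabet''), and it is harmless here because wherever the paper invokes the fact both alphabets are fixed by the instance, but in a fully self-contained proof you might want to state the bound as $O(|Q| + |Q|^2\cdot|\Sigma|)$ rather than $O(|\cA|)$.
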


\begin{fact}[\NFA closed under concatenation of languages]
    For any \NFA $\cA, \cB$ with states $Q_\cA$ and $Q_\cB$, respectively, there exists an \NFA $\cA \odot \cB$ with $\size{Q_\cA} + \size{Q_\cB}$
    states such that $\lang{\cA \odot \cB} = \lang{\cA} \cdot \lang{\cB} \coloneqq \set{u \cdot v \suchthat u \in \lang{\cA} \text{ and } v \in \lang{\cB}}$.
\end{fact}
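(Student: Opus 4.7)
The plan is to exhibit a direct construction of $\cA \odot \cB$, verify the language equality, and check the linear size bound. Write $\cA = (Q^\cA, \Sigma, \delta^\cA, Q_\ini^\cA, Q_\fin^\cA)$ and analogously for $\cB$. I would take the state set $Q^\cA \uplus Q^\cB$, initial states $Q_\ini^\cA$ extended by $Q_\ini^\cB$ when $\varepsilon \in \lang{\cA}$, and final states $Q_\fin^\cB$ extended by $Q_\fin^\cA$ when $\varepsilon \in \lang{\cB}$. The transition relation consists of $\delta^\cA \cup \delta^\cB$ augmented by ``shortcut'' transitions $(p, a, q_0)$ for every $(p, a, q) \in \delta^\cA$ with $q \in Q_\fin^\cA$ and every $q_0 \in Q_\ini^\cB$. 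Intuitively, a shortcut achieves in one step what a virtual $\varepsilon$-transition from $Q_\fin^\cA$ to $Q_\ini^\cB$ would accomplish, which is the natural workaround as the NFA model in the excerpt does not permit $\varepsilon$-transitions.

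For the forward inclusion, given $u \in \lang{\cA}$ and $v \in \lang{\cB}$, I would stitch accepting runs on $u$ and on $v$ together: if both words are non-empty, the last transition of the $\cA$-run is replaced by a shortcut landing in the $\cB$-initial state used to start the $\cB$-run; the edge cases where one of $u, v$ is empty are handled by the augmented initial or final sets. For the converse, the key observation is that transitions in $\delta^\cB$ never leave $Q^\cB$, so any accepting run of $\cA \odot \cB$ has the shape ``$\cA$-portion within $Q^\cA$, optionally a single boundary-crossing shortcut, then $\cB$-portion within $Q^\cB$''. Reading the letter labels on either side of the boundary (counting the shortcut letter on the $u$-side) yields a decomposition $u \cdot v$ with $u \in \lang{\cA}$ and $v \in \lang{\cB}$, where the defining conditions on a shortcut directly supply the required final-$\cA$ and initial-$\cB$ states. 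Runs staying entirely in $Q^\cA$ or $Q^\cB$ are covered by the conditional augmentation of the initial and final sets.

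For the size bound, $|Q^\cA \uplus Q^\cB| = |Q^\cA| + |Q^\cB|$, and using $(a+b)^2 \le 2(a^2 + b^2)$ yields $\size{\cA \odot \cB} = \BigO{\size{\cA} + \size{\cB}}$. I do not anticipate any real obstacle: this is a folklore construction, and the only subtlety is the conditional augmentation of the initial and final sets to correctly capture empty-word acceptance, a step that is unavoidable precisely because $\varepsilon$-transitions are not part of the model.
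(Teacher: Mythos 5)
The paper explicitly states this as one of a handful of folklore facts ``without proofs,'' so there is no paper argument to compare against; your proposal supplies one. Your shortcut construction is a correct standard way to realise concatenation in an NFA model lacking $\varepsilon$-transitions: the disjoint union of states, the shortcuts that bypass $\cA$-final states into $\cB$-initial states, the conditional augmentation of $Q_\ini$ and $Q_\fin$ to account for empty-word membership, and the observation that $Q^\cB$ is absorbing all check out, and the linear size bound follows immediately from $\size{Q^\cA\uplus Q^\cB}=\size{Q^\cA}+\size{Q^\cB}$ together with the paper's size convention.
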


\begin{fact}[translating regular expressions to \NFA] \label{fact:regular-expression-to-nfa}
   There is a deterministic algorithm~\cite{regexToNFA} that, given
   a regular expression $\cE$,
   constructs an \NFA $\regexToAutomaton{\cE}$ 
   such that $\lang{\regexToAutomaton{\cE}} = \lang{\cE}$.
\end{fact}

\subparagraph*{Filters}
A \emph{filter} is an auxiliary term introduced to simplify the proofs in \cref{sec:emptiness-of-a-universal-projection-is-expspace-hard}, allowing for a modular design of regular languages.
Fix a finite alphabet $\Sigma$ and let $\F \coloneqq \set{\top, \bot}$.
Define homomorphisms $\inProj{}, \outProj{} \colon \bracket{\Sigma \times \F}^* \to \Sigma^*$ by their actions on a single letter
\begin{align*}
    &\inProj{a, b} \coloneqq a &
    &\outProj{a, \top} \coloneqq a \qquad \outProj{a, \bot} \coloneqq \varepsilon\,. \\
    &\text{\scriptsize(output every symbol from $\Sigma$)} & &\text{\scriptsize(output only symbols paired with $\top$)}
\end{align*}
A filter over an alphabet $\Sigma$ is any language $F \subseteq \bracket{\Sigma \times \F}^*$.
It induces a binary \emph{input-output relation} $\rel{F} \subseteq \Sigma^* \times \Sigma^*$ between input words $u$ and their subsequences $v$:
\begin{align*}
(u, v) \in \rel{F}
\ \ \mathrel{\overset{\text{\scriptsize def}}{\Longleftrightarrow}}\ \
u = \inProj{w} \text{ and } v = \outProj{w} \text{ for some $w \in F$}\,.
\end{align*}
We define $F(u) \coloneqq \set{ v \suchthat (u,v) \in \rel{F}}$ to be the set of all possible outputs of $F$ on $u$.

\subparagraph*{Filtering regular expressions}
A \emph{filtering regular expression} $\cF$ over alphabet $\Sigma$ is any regular expression over $\Sigma \times \F$.
We write ${\cF(w)} \coloneqq \lang{\cF}(w)$.
To simplify the notation, we only write the $\Sigma$ component of the constants, and underline parts of the expression.
A symbol $a$ appearing in an underlined fragment represents a pair $(a, \top)$, and in a fragment which is not underlined a pair $(a, \bot)$.
Intuitively, underlined portions correspond to parts of the words being output.
We apply the same notational convention to words $w \in (\Sigma \times \Phi)^*$.
Additionally, for $\rho\colon \Sigma \to \Gamma$, we abuse the notation and extend it to the naturally defined homomorphism of type $\Sigma \times \Phi \to \Gamma \times \Phi$, which just preserves the coordinate belonging to $\Phi$.
\begin{exa}
    Fix $A = \set{\w{a}, \w{b}, \w{c}, \dots, \w{z}}$.
    Consider a filtering regular expression $\cF$ and a word $w$, both over $A \cup \set{\wBlank}$:
    \begin{align*}
        \cF &\coloneqq \bracket{\wOut{A} \concat A^* \concat \wBlank}^* \wOut{A} \concat A^*&
        w &\coloneqq \w{nondeterministic\wBlank{}finite\wBlank{}automaton}\,.
    \end{align*}
    We have:
    \begin{align*}
        \cF(w) &= \set{\w{nfa}}\,, \\
        \cF &= \bracket[\big]{\bracket{A \times \set{\top}} \cdot \bracket{A \times \set{\bot}}^* \cdot \bracket{\wBlank, \bot}}^* \cdot \bracket{A \times \set{\top}} \cdot \bracket{A \times \set{\bot}}^*\,, \\
        \lang{\cF} &\mathrel{\makebox[\widthof{\(=\)}]{\(\ni\)}} \w{\wOut{\w{n}}ondeterministic\wBlank{}\wOut{\w{f}}inite\wBlank{}\wOut{\w{a}}utomaton}\,.
    \end{align*}
\end{exa}

        \subsection{Automatic relations}
        
Let $\Sigma$ be a finite alphabet such that $\wSharp \not\in \Sigma$. We
denote by $\Sigma_\wSharp \coloneqq \Sigma \cup \set{ \wSharp }$. Let $w_1,\ldots,w_k \in
\Sigma^*$ such that $w_i = a_{i,1}a_{i,2} \cdots a_{i,\ell_i}$, and
$\ell \coloneqq \max\{ \ell_1,\ldots,\ell_k \}$. For all $1\le i\le k$ and
$\ell_i<j \le \ell$, set $a_{i,j} \coloneqq \wSharp$. The \emph{convolution} $w_1
\otimes w_2 \otimes \cdots \otimes w_k$ of $w_1,\ldots,w_k$ is defined
as
\[
    w_1 \otimes w_2 \otimes \cdots \otimes w_k \coloneqq
    { \begin{bmatrix}
        a_{1,1} \\
        \vdots  \\
        a_{k,1}
    \end{bmatrix}\begin{bmatrix}
                     a_{1,2} \\
                     \vdots  \\
                     a_{k,2}
    \end{bmatrix}\cdots\begin{bmatrix}
                           a_{1,\ell} \\
                           \vdots     \\
                           a_{k,\ell}
    \end{bmatrix}}\subseteq \bracket[\Big]{\Sigma_\wSharp^k}^*\,.
\]
For $R \subseteq \bracket[\big]{\Sigma^*}^k$ and $L \subseteq \bracket[\big]{\Sigma_\wSharp^{k}}^*$ define
\begin{align*}
    \relToLang{R} &\coloneqq \set{ w_1 \otimes w_2 \otimes \cdots \otimes w_k \suchthat (w_1, w_2,\ldots,w_k) \in R }\,, \\
    \langToRel{L} &\coloneqq \set{ (w_1, w_2, \dots, w_{k}) \suchthat w_1 \otimes w_2 \otimes \cdots \otimes w_k \in L}\,.
\end{align*}
In this paper, we say that a relation $R\subseteq (\Sigma^*)^k$ is \emph{automatic} whenever
$\relToLang{R}$ is regular.
%
Furthermore, we
assume that $\relToLang{R}$ is given by some \NFA
$\cA_R=(Q,\Sigma_\wSharp^k,\delta,Q_\ini,Q_\fin)$.

Clearly, not every \NFA $\cA=(Q,\Sigma_\wSharp^k,\delta,Q_\ini,Q_\fin)$ is associated
with an automatic relation $R\subseteq \Sigma^k$ since there are
\emph{a priori} no restrictions on the occurrences of the padding
symbol ``$\wSharp$''. The language $L_\inc \subseteq
(\Sigma_\wSharp^k)^*$ of all incorrect words that cannot be obtained as a
convolution of words $w_1,\ldots,w_k\in \Sigma^*$ can be characterized
by the following regular expression:
\[
    \bracket[\big]{\Sigma^k_\wSharp}^*
    \cdot
    \squareBracket[\Big]{
    \{ \wSharp \}^k \languagesum
    \textstyle\biglanguagesum_{1\le i \le k}
        \bracket[\Big]{
            \bracket[\big]{
                \Sigma_\wSharp^{i-1}
                \times
                \set{ \wSharp }
                \times
                \Sigma_\wSharp^{k-i}
            }
            \cdot
            \bracket[\big]{
                \Sigma_\wSharp^{i-1}
                \times
                \Sigma
                \times
                \Sigma_{\wSharp}^{k-i}
            }
        }
    }
    \cdot \bracket[\big]{\Sigma_\wSharp^k}^*\,.
\]
%
This regular expression ``guesses'' that either a letter consisting
solely of $k$ $\wSharp$ symbols occurs, or in some row of a word in
$\bracket{\Sigma_\wSharp^k}^*$ a ``$\wSharp$'' symbol is followed by a
symbol in $\Sigma$. The language of this regular expression can be
implemented by an \NFA with $k+2$ states. Hence, the complement
$L_\corrrr \coloneqq \setComplement{L_\inc}$ of $L_\inc$, characterizing
all ``good'' words, can be recognized by an \NFA with $2^{k+2}$
states. For the sake of readability, we do not parametrise $L_\inc$
explicitly with $k$; the relevant $k$ will always be clear from the
context.

The \emph{existential projection} of $R\subseteq (\Sigma^*)^{d+k}$
onto the first $d$ components is defined as
%
\begin{align*}
    \proj[d]{R} &\coloneqq
    \set[\big]{
        \bar{u} \in (\Sigma^*)^d
        \suchthat
        (\bar{u},\bar{w}) \in R \text{ for some } \bar{w} \in (\Sigma^*)^k
    }\,.
\end{align*}
The dual of existential projection is \emph{universal projection}:
\begin{align*}
    \univProj[d]{R} &\coloneqq
    \set[\big]{
        \bar{u} \in (\Sigma^*)^d
        \suchthat
        (\bar{u},\bar{w}) \in R \text{ for all } \bar{w} \in (\Sigma^*)^k
    }\,.
\end{align*}
%
It is clear that $\univProj[d]{R} =
\setComplement{\proj[d]{\setComplement{R}}}$.
We overload the projection notation for languages
\begin{align*}
    \proj[d]{L} &\coloneqq \relToLang[][][\big]{\proj[d]{\langToRel{L}}} &
    \univProj[d]{L} &\coloneqq \relToLang[][][\big]{\univProj[d]{\langToRel{L}}}\,.
\end{align*}
In this article, given $\cA_R$ such
that $\relToLang{R} = \lang{\cA_R} \subseteq \bracket[\big]{\Sigma_\wSharp^{d+k}}^*$, we are concerned with the
computational complexity of deciding whether
$\univProj[d]{R} = \emptyset$, measured in terms of $\size{\cA_R}$.
In \cref{sec:emptiness-of-a-universal-projection-is-expspace-hard,sec:upper-bounds} we will prove the following.
\begin{thm} \label{thm:main-theorem}
    Deciding whether $\univProj[d]{R}\neq \emptyset$ for an automatic
    relation $R\subseteq (\Sigma^*)^{d+k}$ 
    given by an \NFA 
    $\cA_R$ is \expspace-complete. The lower bound already holds for
    $d=k=1$.
\end{thm}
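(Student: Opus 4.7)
The upper bound follows a standard template. My plan is to use the identity $\univProj[d]{R} = \setComplement{\proj[d]{\setComplement{R}}}$. Given $\cA_R$, I would first apply the subset construction to get an \NFA of size $2^{O(\size{\cA_R})}$ for $\setComplement{\lang{\cA_R}}$, then intersect with a constant-size \NFA for $L_\corrrr$ to yield an \NFA $\cB$ accepting $\relToLang{\setComplement{R}}$. Next I apply the projection homomorphism that drops the last $k$ tracks (which preserves the state count), and strip trailing all-$\wSharp$ columns to obtain an \NFA for $\relToLang{\proj[d]{\setComplement{R}}}$. Finally, I test non-universality of this \NFA relative to the well-formed $d$-track convolutions via the standard on-the-fly subset-construction algorithm, which runs in \pspace in the size of the \NFA. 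In total this gives \pspace in $2^{O(\size{\cA_R})}$, i.e., \expspace in $\size{\cA_R}$.

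For the lower bound with $d = k = 1$, my plan is to reduce from the \expspace-complete corridor tiling problem \tilingProblem{} of width $W = 2^n$. The reduction constructs, in polynomial time, an \NFA $\cA_R$ over $(\Sigma_\wSharp^2)^*$ such that $\univProj[1]{\lang{\cA_R}} \neq \emptyset$ iff the tiling instance has a solution. The existentially quantified word $u$ encodes a candidate tiling as a sequence of rows, each tile tagged with its binary column address. The universally quantified word $w$ plays the role of a challenge: depending on its shape it selects a pair of horizontally adjacent cells, a pair of vertically adjacent cells (at distance $W$), or a cell in the initial or final row. The \NFA $\cA_R$ accepts $(u,w)$ whenever $w$ is malformed, or $u$ is malformed in a locally detectable way, or the tiles selected by $w$ satisfy the corresponding constraint. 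Consequently, every challenge $w$ is accepted for a given $u$ iff $u$ encodes a globally consistent tiling, so $\univProj[1]{\lang{\cA_R}}$ is non-empty iff the tiling instance admits a solution.

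The main obstacle will be keeping $\cA_R$ polynomial in size despite having to verify exponentially many constraints across a corridor of width $W = 2^n$. The essential trick is non-determinism: rather than scanning all of $u$, $\cA_R$ guesses where in $u$ the positions named by $w$ lie and only inspects the two relevant tiles, which keeps the state count polynomial in $n$. Comparing two $n$-bit addresses is done by guessing a most-significant differing bit (or confirming equality bit by bit). The subtlest point is vertical matching across a gap of $2^n$ cells: here the address structure of $u$ must be designed so that a short probe $w$ can both pinpoint two row-aligned cells and let $\cA_R$ verify their alignment without maintaining any exponential data in its state. I plan to assemble $\cA_R$ modularly from filtering regular expressions of polynomial size and translate them to an \NFA via \cref{fact:regular-expression-to-nfa}, and then prove correctness by case analysis on the shape of $w$.
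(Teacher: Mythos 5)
Your upper-bound sketch is essentially the paper's argument, and the high-level shape of your lower-bound plan also matches it: reduce from corridor tiling of width $2^n$, let the first track $u$ be a candidate tiling, let the universally quantified second track $w$ be a challenge, and make $\cA_R$ accept $(u,w)$ unless $w$ names a genuine local violation.

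The gap is in the mechanism you offer for vertical alignment, which you yourself flag as the subtlest point. ``Guessing a most-significant differing bit'' certifies \emph{inequality} or increments between neighbouring addresses, but to certify \emph{equality} of two column addresses that sit $\Theta(2^n)$ cells apart in $u$, the automaton would have to carry one whole address while scanning to the other --- that costs $2^n$ states --- and ``confirming equality bit by bit'' is not an option because the two addresses are not interleaved in the input. The missing idea is that the universal quantification over $w$ has to be spent a \emph{second} time: besides marking which pair of cells to compare, $w$ must carry a small tag telling $\cA_R$ which one of polynomially many sub-checks to perform on this particular run, so a single run verifies only one sub-check while the range of tags jointly enforces the full alignment. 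The paper realizes this with a ``comb'' encoding rather than with plain binary addresses: each cell carries a prefix $\comb{n}[1,j]$ and suffix $\comb{n}[j+1,2^n+1]$ of the comb word $\comb{n}$ of length $2^n+1$, two cells are vertically aligned precisely when a prefix of one concatenated with the suffix of the other spells $\comb{n}$, and --- crucially --- $\set{\comb{n}}$ is the intersection of $n+1$ regular languages $\lang{\cE_0},\dots,\lang{\cE_n}$, each of size $\BigO{n}$, so a tag $i \in N_n$ on the second track selects which $\cE_i$ to check while the automaton stays at size $\BigO{n^3}$. Without this device to spend the universal quantifier twice, your construction has no way to stay polynomial, so you should make it explicit before attempting the case analysis.
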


        \subsection{\Buchi arithmetic}
        
As discussed in the introduction, given an integer $p\ge2$, \emph{B\"uchi arithmetic of base $p$} is the first-order theory
of the structure $\langle \N,0,1,+, \pV_p{} \rangle$, where $\pV_p{}$ is a binary predicate such that $\pV_p{}(x,y)$ holds whenever $x$ is the largest power of $p$ dividing $y$ without remainder, i.e., $x=p^k$ for some $k\ge 0$, $x \mid y$ and $px \nmid y$. Atomic formulas of \Buchi arithmetic are $V_p$ predicates
or linear inequalities of the form $a_1 \cdot x_1 + \cdots + a_n \cdot x_n \ge b$, where all numbers are encoded in binary.
We use some standard abbreviations such as strict inequalities, etc.
Note that $\pV_p{x,x}$ holds if and only if
$x$ is a power of $p$, and we write $\pP_p{x}$ to abbreviate
$\pV_p{x,x}$. Given a formula $\Phi(x_1,\ldots,x_n)$
of B\"uchi arithmetic, we define $\sem{\Phi} \coloneqq \{ (m_1,\ldots,m_n) \in \N^n: \Phi(m_1,\ldots,m_n) \text{ is valid} \}$ to be its set of satisfying assignments. Given a string $w=d_0\cdots d_n\in \set{0,\ldots, p-1}^+$, we define $\sem{w}_p\in \N \coloneqq \sum_{i=0}^{n} p^i \cdot d_i$.

    \section{Emptiness after universal projection is \expspace-hard}
    \label{sec:emptiness-of-a-universal-projection-is-expspace-hard}

        \subsection{Tiling problems}
        Let $\cT \subseteq_\finite \N^4$ be a set of \emph{tiles} with colours coded as four-tuples of numbers with
associated projections $\tTop{}, \tRight{}, \tBottom{}, \tLeft{}\colon \N^4 \to \N$ to access individual colours of a tile, and let $\colours{\cT} \coloneqq \tTop{\cT} \cup \tRight{\cT} \cup \tBottom{\cT} \cup \tLeft{\cT}$.
\begin{exa}
A tile $t = (2, 4, 3, 3)$ is depicted as
\smash{$
    \begin{tikzpicture}[baseline=-0.5ex]
        \tileInner{2}{4}{3}{3}
    \end{tikzpicture}
$} with various auxiliary background shades corresponding to colour values.
\end{exa}
A $\cT$-\emph{tiling of size $(h, w) \in \N_+^2$} is an $h \times w$ matrix $T = \myMatrix{t_{i,j}}_{i, j} \in \cT^{h\times w}$.
It is \emph{valid} whenever colours of the neighbouring tiles match:
\begin{align}
    \tBottom{t_{i, j}} &= \tTop{t_{i+1, j}}   \label{eq:tiling-property-a}&  &\text{for every $1 \le i \le h - 1$ and $1 \le j \le w$,} \\
    \tRight{t_{i, j}}  &= \tLeft{t_{i, j+1}}  \label{eq:tiling-property-b}&  &\text{for every \rlap{$1 \le i \le h$}\phantom{$1 \le i \le h - 1$} and $1 \le j \le w - 1$.}
\end{align}
See \Cref{fig:problem-instance-and-solution} on \cpageref{fig:problem-instance-and-solution} for an example of a valid tiling. A \emph{$\cT$-tiling of width $w \in \N_+$} is any tiling in $\cT^{h\times w}$ for some $h \in \N_+$.
We define
\begin{align*}
    \cT^{\star \times w} \coloneqq \bigcup_{h \in \N_+} \cT^{h\times w}\,.
\end{align*}
Additionally, for two distinguished tiles $\toplefttile, \bottomrighttile \in \cT$, let $(\cT, \toplefttile, \bottomrighttile)$-tiling be any $\cT$-tiling with $\toplefttile$ placed in its top-left corner, and $\bottomrighttile$ in its bottom-right corner.

\begin{prob} \label{problem:tiling-problem} $\tilingProblem$
    \newcommand{\myWidth}{21mm}
    \ \par\smallskip\noindent
    \begin{tcolorbox}[blanker,sidebyside,sidebyside gap=2.7mm,sidebyside align=top seam,parbox=true,lefthand width=\myWidth,boxsep=0pt,top=0pt,bottom=0pt,before skip=0pt,after skip=0pt,oversize, halign lower=left,halign=right]
        \strut\normalfont\textsc{{Input:}}
        \tcblower
        \strut A $4$-tuple $(\cT,\toplefttile, \bottomrighttile, n)$, where
        \begin{itemize}
            \item $\cT \subseteq_{\finite} \N^4$ is a finite set of tiles,
            \item $\toplefttile, \bottomrighttile \in \cT$,
            \item $n \in \N_+$ given in unary.
        \end{itemize}
    \end{tcolorbox}

    \medskip\noindent
    \begin{tcolorbox}[blanker,sidebyside,sidebyside gap=2.7mm,sidebyside align=top seam,parbox=true,lefthand width=\myWidth,boxsep=0pt,top=0pt,bottom=0pt,before skip=0pt,after skip=0pt,oversize, halign lower=justify,halign=right]
        \strut\normalfont\textsc{{Question:}}
        \tcblower
        \strut Does there exist a valid $(\cT, \toplefttile, \bottomrighttile)$-tiling of width $2^n$?
    \end{tcolorbox}
\end{prob}
By $\mathbb{T} \subset \mathcal{P}_\finite(\N^4) \times \N^4 \times \N^4 \times \N_+$ we denote the set of all 
syntactically valid instances of the above problem.

\smallskip
\begin{lem}
    $\tilingProblem$ 
    is \expspace-hard.
\end{lem}
\begin{proof}[Proof (Sketch)]
It is part of the folklore of the theory of computation that tiling problems can simulate the computation of Turing machines,
the width of the requested tiling corresponding to the length of tape the machine is allowed to use.
\expspace-completeness of a variant similar to the one above is sketched in~\cite{TilingProblems}, see also the \textsc{Corridor Tiling} problem in~\cite{vanEmde19}
and the discussion of it capturing computations of space-bounded Turing machines.

In more technical terms, let $M$ be a Turing machine using at most $2^{p(m)}$ tape cells
on an input $w \in \{0,1\}^*$ of length $m$ for some polynomial $p\colon \N \to \N$. 
To decide whether $M$ accepts $w$, we show how to construct in logarithmic space an
instance $(\cT,\toplefttile, \bottomrighttile, n)$ of $\tilingProblem$ from 
an encoding of $M$ and $w$ such that a valid tiling exists if and only if $M$ has an accepting
computation on input $w$. Note that any computation of $M$ can be represented by a computation
table of width $2^{p(m)}$ such that the $i$th row of that table represents the full configuration
of $M$ after $i$ computation steps. Each cell of the computation table is either a symbol
of the working tape, or such a symbol paired with a control state $q$ of $M$ to indicate that 
$M$ is currently reading that symbol in state $q$. Consistency of a computation table can be verified locally: in every row, only one cell paired with some $q$ is allowed to occur, and 
between rows, the movement of the head, the state transition and the updating of cells occur 
only locally around the head position. 
This local consistency property is precisely what allows the computation table
to be enforced by local tiling constraints in a $\tilingProblem$ instance.
Finally, requiring the initial state of $M$ to appear
in the first row and the accepting state in the last row of the computation table ensures
that the table represents an accepting computation of $M$.

To define $(\cT,\toplefttile, \bottomrighttile, n)$, we set $n=p(m)+1$ and choose a 
finite set of tiles that allow us to represent a computation table as described above. The 
horizontal adjacency constraints enable us to adhere to the requirement that a 
state appears only once along every row, and vertical constraints can be used to ensure
correct state, head, and tape updates between consecutive computation steps. 
The special tile $\toplefttile$ is chosen such that the first row of a tiling 
represents a valid inital configuration of $M$ reading $w$, and whenever
an accepting state occurs in some row, the tiles are chosen such that $\bottomrighttile$ 
can appear at the very last tile of the tiling.
\end{proof}

        \subsection{The reduction}
        We prove \cref{thm:main-theorem} by a reduction from $\tilingProblem$ and show that \expspace-hardness occurs in the simplest case of universal projection: projecting a binary relation to get a unary one.
Intuitively, for each instance $\I = (\cT, \toplefttile, \bottomrighttile, n)$ of $\tilingProblem$, we construct an automaton $\cA_\I$ such that $\univProj[1]{{\lang{\cA_\I}}}$ is not empty if and only if $\I$ is a \textsc{Yes}-instance.
Formally, we provide a family of \logspace-constructible \NFA $(\cA_\I)_{\I \in \mathbb{T}}$ over the alphabet $(\Sigma_\I \cup \set{\wSharp})^2$ for some $\Sigma_\I$, each of size $\BigO{n^4 \cdot \size{\cT}^3}$, representing  the relation $\automatonToRelation{\cA_\I} \subseteq (\Sigma_\I^*)^2$, such that
\begin{align} \label{eq:reduction-property}
    \univProj[1]{\lang{\cA_\I}} \neq \emptyset \iff \text{there exists a valid $(\cT, \toplefttile, \bottomrighttile)$-tiling of width $2^n$}.
\end{align}
The \logspace constructibility is easy to verify and we chose not to provide explicit arguments for it.
For the remainder of this section, we fix an instance $\I \in \mathbb{T}$.
For technical reasons, we assume with no
loss of generality that $n \geq 6$.

In \cref{subsec:word-encoding-of-tiling},
we define $\Sigma_\I$, specify a language $\encLang \in \Sigma_\I^*$, and prove that:
\begin{restatable}{lem}{correctEncodingsDesignChoice}
\label{lem:correct-encodings-design-choice}
    $\encLang \neq \emptyset \iff \text{there exists a valid $(\cT, \toplefttile, \bottomrighttile)$-tiling of width $2^n$}$.
\end{restatable}
In turn in \cref{subsec:construction-of-the-automaton}, we construct an \NFA $\cA_\I$ such that
\begin{lem} \label{lem:projection-of-relation-equals-valid-encodings}
    $\univProj[1]{\lang{\cA_\I}} = \encLang$.
\end{lem}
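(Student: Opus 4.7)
The plan is to leverage the forthcoming construction of $\cA_\I$ in \cref{subsec:construction-of-the-automaton} as a union $\cA_\I = \cB_0 \oplus \cB_1 \oplus \cdots \oplus \cB_m$ of polynomially many sub-automata, each obtained from a small filtering regular expression via \cref{fact:regular-expression-to-nfa}. The sub-automaton $\cB_0$ handles pairs $(u,v)$ where $v$ does not match the format of any error certificate, while each $\cB_i$ for $i \ge 1$ corresponds to one class of encoding defect (malformed structural format, misplaced corner tiles, an incorrect binary row counter, horizontal mismatches between adjacent tiles, or vertical mismatches between tiles one row apart). Each $\cB_i$ is designed to accept $(u,v)$ precisely when either $v$ is not in the certificate format expected by $\cB_i$, or $v$ is of that format yet does not genuinely certify the corresponding defect in $u$.

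Given such a decomposition, the proof splits into two set-theoretic inclusions. For $\encLang \subseteq \univProj[1]{\lang{\cA_\I}}$, I fix $u \in \encLang$ and an arbitrary $v \in \Sigma_\I^*$ and exhibit some $\cB_i$ that accepts $(u,v)$. If $v$ does not match any error-certificate format, $\cB_0$ accepts. Otherwise $v$ matches the format of a unique error type $i$, and since $u \in \encLang$ implies via \cref{lem:correct-encodings-design-choice} that the tiling encoded by $u$ has no defects of any kind, the certificate $v$ cannot actually witness a defect of type $i$, so $\cB_i$ accepts via its "format matches but no genuine defect" branch. In either case $(u,v) \in \lang{\cA_\I}$.

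For the converse $\univProj[1]{\lang{\cA_\I}} \subseteq \encLang$ I argue contrapositively: assuming $u \notin \encLang$, I exhibit a single $v^\star$ with $(u, v^\star) \notin \lang{\cA_\I}$. By \cref{lem:correct-encodings-design-choice} $u$ does not encode a valid $(\T, \toplefttile, \bottomrighttile)$-tiling of width $2^n$, so it carries at least one defect; I synthesise a canonical certificate $v^\star$ in the format associated with that defect type $i$. Then $\cB_0$ rejects $(u, v^\star)$ because $v^\star$ is well-formed; $\cB_i$ rejects because $v^\star$ genuinely witnesses a real defect of type $i$; and each $\cB_j$ with $j \ne i$ rejects because the certificate formats will be engineered to be pairwise disjoint via distinguishing prefixes, ensuring $v^\star$ does not match the format $j$.

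The principal technical hurdle is the vertical-mismatch mode, since a naive NFA verifying non-mismatch between tiles that are $2^n$ positions apart would require exponentially many states. The strategy is to let the vertical certificate $v^\star$ carry a ruler-like binary annotation encoding the column index, so that $\cB_\text{vert}$ only needs to (i) verify local consistency of the annotation along $u$, using counters of width $O(n)$, and (ii) check that at the annotated column the tiles in the two successive rows actually mismatch, both doable by an NFA of size $O(n^3)$. The remainder of the proof is routine bookkeeping: for each defect mode, checking that the corresponding filtering regular expression is of polynomial size, that the induced sub-automaton accepts exactly the intended pairs, and that the various certificate formats are indeed mutually exclusive so that the case analysis above goes through.
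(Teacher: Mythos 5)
Your high-level plan is structurally aligned with the paper's: build $\cA_\I$ as a disjoint union of sub-automata, use the second component of the convolution to select which of the six conditions to verify, and rely on a ruler/comb idea to handle the vertical check. However, there are two genuine problems in the proposal as written.

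First, the stated semantics of the sub-automata $\cB_i$ is self-contradictory. You define each $\cB_i$ to accept $(u,v)$ ``precisely when either $v$ is not in the certificate format expected by $\cB_i$, \emph{or} $v$ is of that format yet does not genuinely certify the corresponding defect.'' With this definition, in the converse inclusion, once you fix $v^\star$ in format $i$, each $\cB_j$ with $j\neq i$ \emph{accepts} $(u,v^\star)$ (since $v^\star$ is not in format $j$), directly contradicting your claim that all sub-automata reject. You need the opposite discipline: $\cB_i$ should accept $(u,v)$ only when $v$ \emph{is} in format $i$ and fails to certify a defect, with $\cB_0$ exclusively catching the $v$'s in no format. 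The paper achieves exactly this partition via the first letter $(\wA,j)$ in the modular-design lemma, so that after dispatch each sub-automaton only ever sees $v$'s of its own format.

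Second, and more seriously, the key device that makes the $O(n^3)$ size bound possible is missing. You claim that ``verifying local consistency of the annotation along $u$, using counters of width $O(n)$'' is ``doable by an NFA of size $O(n^3)$,'' but counters of width $O(n)$ naively require $2^{\Omega(n)}$ states. The paper's crucial trick is to represent the comb language $\set{\comb{n}}$ as the intersection $\bigcap_{0\le i\le n}\lang{\cE_i}$ of $n+1$ small regular languages, and then to convert that \emph{intersection} into a \emph{union} by dispatching on the $N_n$-component of the last letter of $v$: the automaton $\cC_n$ is a union of $n+1$ branches, each of size $O(n)$, and the universal projection over the $N_n$ component recovers the intersection. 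Without this intersection-to-union conversion under universal projection, there is no route to a polynomial-size NFA for $\hat\cC^i_\I$, and the lower bound does not go through. Your sketch gestures at a ``ruler-like binary annotation'' in the certificate but never explains how a polynomial-size NFA could verify it, so the hardest part of the argument is still a gap.
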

This completes the proof of \cref{thm:main-theorem}, the correctness of the reduction stemming directly from \cref{lem:correct-encodings-design-choice,lem:projection-of-relation-equals-valid-encodings}.


%

        \subsection{Word encoding of tilings} \label{subsec:word-encoding-of-tiling}
        Here, we provide $\Sigma_\I$ and an encoding $\enc{}\colon \cT^{\star \times 2^n} \to \Sigma_\I^*$.
Then we define $L_\I$ as an intersection of six languages, and prove \cref{lem:correct-encodings-design-choice} by showing that it coincides with the language of encodings of valid tilings.

Let $N_n \coloneqq \N \cap [0, n]$.
Additionally, let $N_n^{{\sim} k} \coloneqq \set{ i \in N_n \suchthat i \mathop{{\sim}} k }$ for ${{\sim}} \in \set{<,=,>}$ and $k \in \N$.
The alphabet $\Sigma_\I$ consists of three groups of symbols -- tiles from $\cT$, numbers from $N_n$, and auxiliary symbols:
\begin{align*}
    \Sigma_{\I} \coloneqq \cT \cup N_n \cup {\set{\wA, \wRB, \wRE, \wCB, \wCE}}\,.
\end{align*}
The symbol $\wA$ is a mnemonic marking places where we enforce ``for-all''-type properties.
In what follows, we colour some symbols (\eg, $\w{\wPref{3010}}\concat{t}\concat{\w{\wSuf{20103}}}$) to assist in understanding the construction; such designations are auxiliary and are not reflected in the alphabet.
The encoding of runs uses the word $\comb{n} \in N_n^*$
\begin{align*}
    \comb{n} &\coloneqq {n} \concat \combInner{n-1} \concat {n}\,,
        \intertext{where the words \smash{$\bracket{\combInner{i}}_{0 \le i \le n}$} are defined recursively as}
    \combInner{0} &\coloneqq \w0 \\
    \combInner{i} &\coloneqq \combInner{i-1} \concat {i} \concat \combInner{i-1} & &\text{for $0 < i \le n$.}
\end{align*}
Observe that $\comb{n}$ has length exactly $2^n + 1$. 
The recursive definition of $\comb n$ will later enable us to recognise 
$\{\comb{n}\}$
as an intersection of $n$ \NFA, each with a constant number of states.
The intersection itself will be implemented as a universal projection step.

\begin{exa}
    $\comb{4}$ is $\w{\wWordIV}$ and has length $17$.
\end{exa}

The following observation shows that $\comb{n}$ is tightly related to a binary counter on $2^n$ bits, which will play a crucial role in \cref{sec:second-buchi-lower-bound}.

\begin{obs} \label{obs:crucial-observation} Consider a $2^n$-bit counter counting from $0$ to $2^{2^n}-1$ and overflowing back to $0$:
\begin{align*}
    \underbrace{0{\ldots}00}_{\text{$2^n$ times}} \xrightarrow{0} 0{\ldots}01 \xrightarrow{1} 0{\ldots}10 \xrightarrow{0} 0{\ldots}11 \xrightarrow{2} \cdots \to 1{\ldots}10 \xrightarrow{1} \underbrace{1{\ldots}11}_{\text{$2^n$ times}}\xrightarrow{N} \underbrace{0{\ldots}00}_{\text{$2^n$ times}}
\end{align*}
Let $w_i$ be the binary representation of $i$ on $2^n$ bits; set $w_{2^{2^n}} \coloneqq w_0$.
We index bits of $w_i$ starting from the least significant one (index $0$) to the most significant one (index $2^n-1$).
The $i$th letter of $\comb{2^n}$ corresponds to the smallest index at which $w_i$ and $w_{i+1}$ differ for $i \in \set{0,\ldots, 2^{2^n}-1}$.
Furthermore, for every such $i$, the $j$th bit of $w_i$ is $1$ if and only if there is a position $i'$ to the left of $i$ such that $\comb{2^n}[i'] = j$ and the infix $\comb{2^n}[i'+1, i]$ contains only symbols smaller than $j$.

An example involving $\comb{4}=40102\dots04$ is given in \cref{fig:crucial-observation}. We index the positions between the letters of $\comb{4}$, starting from 0 and displayed in a smaller grey font. Those positions  correspond to the binary strings $w_i$ above. We highlight positions 5 and 11 and illustrate how to obtain the binary 
representations $w_5$ and $w_{11}$ based on the properties of the comb. To this end, above every element $k$ of the comb, we vertically (top to bottom) write the binary string $1\cdot0^k$.
In order to obtain, e.g, bit $b_1$ of the bit string 
$w_5=b_3 b_2 b_1 b_0$,  we search to the left for the closest
symbol ``visible'' at level 1, which is a 0 from the bit string
$100$, and which determines the value of $b_1\coloneqq 0$.
Likewise, the most significant bit of
$w_{11}$ is 1 since at position 8, $\comb{4}$ takes value 3 and
is hence labelled with $1000$.
\end{obs}

\begin{figure}
\begin{tikzpicture}[>=stealth,xscale=0.79, yscale=0.4]
  \draw[->] (-4.5,-0.65) -- (12.5,-0.65);
  \foreach \x/\h in { 
     -4/4,-3/0, -2/1, -1/0, 0/2, 
      1/0,   2/1,   3/0,   4/3,   5/0, 
      6/1,   7/0,   8/2,   9/0,  10/1, 
     11/0,  12/4                        
  }{
    \node at (\x,-1.3) {\small\h};
    \pgfmathtruncatemacro{\xpp}{\x + 4}
    \ifnum\xpp<16
    \node at (\x+0.5,-2) {\scriptsize\textcolor{gray}{\xpp}};
    \fi
    \ifnum\h>0
    \foreach \y in {1,...,\h} {
        \node at (\x,\y-1) {\scriptsize{0}};
    }
    \fi
    \node at (\x,\h) {\scriptsize{1}};
    \draw (\x-0.2,0-0.4) rectangle (\x+0.2,\h+0.4);
  }

  \draw[line width=2pt,cA] (1.7,-2) -- (1.7,3.4);
  \draw[line width=2pt,cD] (7.7,-2) -- (7.7,3.4);

  \node[inner sep=0] (p1) at (1.5,0) {\footnotesize{$\mathbf{1}$}};
  \node[inner sep=0] (p2) at (1.5,1) {\footnotesize{$\mathbf{0}$}};
  \node[inner sep=0] (p3) at (1.5,2) {\footnotesize{$\mathbf{1}$}};
  \node[inner sep=0] (p4) at (1.5,3) {\footnotesize{$\mathbf{0}$}};
  
  \node[inner sep=0] (p1) at (7.5,0) {\footnotesize{$\mathbf{1}$}};
  \node[inner sep=0] (p2) at (7.5,1) {\footnotesize{$\mathbf{1}$}};
  \node[inner sep=0] (p3) at (7.5,2) {\footnotesize{$\mathbf{0}$}};
  \node[inner sep=0] (p4) at (7.5,3) {\footnotesize{$\mathbf{1}$}};
  
  \begin{pgfonlayer}{background}
    \foreach \x/\y in {1.3/0,0.3/1,0.3/2,-3.7/3} {
        \fill[cA,opacity=0.15] (\x-0.5,\y-0.4) rectangle (1.7,\y+0.4);
    }
    \foreach \x/\y in {7.3/0,6.3/1,4.3/2,4.3/3} {
        \fill[cD,opacity=0.15] (\x-0.5,\y-0.4) rectangle (7.7,\y+0.4);
    }
  \end{pgfonlayer}

  \node[fill=cA,text=white,inner xsep=3.1pt,inner ysep=1.5pt] at (1.5,-2) {\scriptsize\textbf5};
  \node[fill=cD,text=white,inner xsep=1.2pt,inner ysep=1.5pt] at (7.5,-2) {\scriptsize\textbf{1\kern-1pt1}};
  \node at (-5.4,-1.3) {\small$\comb{4} = {}$};
  \node at (-5.15,0) {\scriptsize $b_0$};
  \node at (-5.15,1) {\scriptsize $b_1$};
  \node at (-5.15,2) {\scriptsize $b_2$};
  \node at (-5.15,3) {\scriptsize $b_3$};

   \foreach \x in {0,...,15} {
    \node at (\x-3.5,-3) {\scriptsize\color{gray}$w_{\x}$};
  }
\end{tikzpicture}
\caption{Illustration of $\comb{4}$ and how to decode positions in a comb.}
\label{fig:crucial-observation}
\end{figure}

%
%
We define the \emph{encoding} function $\enc{}\colon \cT^{\star \times 2^n} \to \Sigma_\I^*$ in three steps.
Let $T = \myMatrix{t_{i,j}}_{i,j} \in \cT^{h \times 2^n}$ for some $h \in \N$.
The tile $t_{i,j}$ in $T$ is represented as
\begin{align*}
    \encCell{T,i,j} &\coloneqq \wCB \concat \text{\wPref*{$\comb{n}[1, j]$}} \concat {t_{i,j}} \concat \text{\wSuf*{$\comb{n}[\kern1pt j+1, 2^n+1]$}} \concat \wA \concat \wCE\,,
    \intertext{a single row is encoded as}
    \encRow{T,i} &\coloneqq \wRB \concat {\textstyle\biglanguageconcat_{1 \leq j \leq 2^n} \encCell{T,i,j}} \concat \wRE\,,
    \intertext{and finally, the encoding of the entire tiling is defined as}
    \enc{T} &\coloneqq \wEncodingPrefix \concat \textstyle\biglanguageconcat_{1 \leq i \leq h} \encRow{T,i}\,.
\end{align*}
\phantom{a}\vspace{-\baselineskip}
\begin{exa}
    The tiling $T = \myMatrix{t_{i,j}}_{i,j}$ of size $(2, 2^4)$ is encoded as
    \begingroup\small
    \begin{align*}
        \wEncodingPrefix \concat &\wRB
        \wCB \w{\wPref{4}}\,t_{1,1}\,\w{\wSuf{0102010301020104}} \concat \wA \wCE \cdots
        \wCB \w{\wPref{40102}}\,t_{1,5}\,\w{\wSuf{01030$\cdots$04}} \concat \wA \wCE \cdots
        \wCB \w{\wPref{4010201030102010}}\,t_{1,16}\,\w{\wSuf{4}} \concat \wA \wCE \wRE \cdot {} \\
        {} \cdot {} &\wRB
        \wCB \w{\wPref{4}}\,t_{2,1}\,\w{\wSuf{0102010301020104}} \concat \wA \wCE \cdots
        \wCB \w{\wPref{40102}}\,t_{2,5}\,\w{\wSuf{01030$\cdots$04}} \concat \wA \wCE \cdots
        \wCB \w{\wPref{4010201030102010}}\,t_{2,16}\,\w{\wSuf{4}} \concat \wA \wCE \wRE.
    \end{align*}
    \endgroup
    The word above is written in two lines to make the correspondence to a tiling more apparent.
\end{exa}

\paragraph*{Languages of encodings}

Define the language of encodings of valid tilings of width $2^n$ with $\toplefttile, \bottomrighttile$ in the correct corners
\begin{align*}
    \valEncLang \coloneqq \set[\big]{\enc{T} \suchthat \text{$T$ is a valid $(\cT, \toplefttile, \bottomrighttile)$-tiling of width $2^n$}}\,.
\end{align*}
In order to express the notion of an encoding of a valid tiling in a more tangible way,
below we define languages $\condLang1, \dots, \condLang6$,
which, as we prove in \cref{lem:correct-encodings-design-choice-more-precise}, jointly characterise encodings.
The first three are given by regular expressions fixing
basic properties of the encoding, the next two guarantee an appropriate width of the encoding,
while the last one enforces in a non-trivial way that the colours match
vertically.

\begin{cond} 
Recall that the alphabet $\Sigma_\I$ is
parametrised by $n$.
Language $\condLang1$ is given by the regular expression
    \label{condition:first}
    \label{condition:rows-of-cells-structure}
    \begin{align*}
        \cE_\I^1 \coloneqq
            \left(\wRB \concat \wCB \concat \wPref{n} \concat \cT \concat \wSuf{N_n^*} \concat \wA \wCE \concat \Big(\wCB \concat \wPref{N_n^*} \concat \cT \concat \wSuf{N_n^*} \concat \wA \concat \wCE\Big)^* \concat \wCB \concat \wPref{N_n^*} \concat \cT \concat \wSuf{n} \concat \wA \wCE \wRE\right)^*\,.
    \end{align*}
    Intuitively, encodings consist of rows bounded by $\wRB$ and $\wRE$; each row comprises cells delimited by $\wCB$ and $\wCE$. The first cell begins with the number $n$ followed by a tile, while the last one ends with a tile, $n$ and $\wA$. 
    Observe that while the size of $\cB_\I^1 \coloneqq\regexToAutomaton{\cE^1_\I}$ grows
    linearly in $n$, its number of states is constant.
\end{cond}

\begin{cond}
\label{condition:corners-fixed}
    The language $\condLang2$ is defined by the regular expression
    \begin{align*}
        \cE_\I^2 \coloneqq
            \wRB \wCB \wPref{N_n^*} \concat \toplefttile \concat \wSuf{N_n^*} \concat \wA \wCE \concat \Big(\wCB \wPref{N_n^*} \concat \cT \concat \wSuf{N_n^*} \concat \wA \wCE\Big)^*
         \concat \wRE \concat
            \concat \Sigma_\I^* \concat \concat
            \wRB \concat \Big(\wCB \wPref{N_n^*} \concat \cT \concat \wSuf{N_n^*} \concat \wA \wCE\Big)^* \concat \wCB \wPref{N_n^*} \concat \bottomrighttile \concat \wSuf{N_n^*} \concat \wA \wCE \wRE\,.
    \end{align*}
    This requires the first row of a purported tiling to begin with $\toplefttile$, and the last row to end with $\bottomrighttile$.
    As in \cref{condition:rows-of-cells-structure}, $\condLang2$ is obtained by an \NFA $\cB_\I^2 \coloneqq \regexToAutomaton{\cE_\I^2}$ with $\BigO{1}$ states.
\end{cond}

\begin{cond}
    \label{cond:colours-match-horizontally}
    Let $Q \coloneqq \colours{\cT}$ and $\cB^3_\I \coloneqq (Q, \Sigma_\I, \delta, Q, Q)$, where $\delta$ has transitions
    \begin{align*}
        i &\xrightarrow{t} j & &\text{for every $i,j \in Q$ and $t \in \cT$ such that $\tLeft{t} = i$ and $\tRight{t} = j$,} \\
        i &\xrightarrow{a} i & &\text{for every $i \in Q$ and $a \in \Sigma_\I \setminus {(\cT \cup \set{\wRE})}$,} \\
        i &\xrightarrow{\wRE} j & &\text{for every $i, j \in Q$.}
    \end{align*}
    We set $\condLang3 \coloneqq \lang{\cB^3_\I}$; it contains encodings where tile colours match horizontally. Observe that $\size Q=\BigO{\size \cT}$.
\end{cond}

\begin{cond}[each cell contains a $\comb{n}$]
    \label{cond:cells-contain-combs} The definition of $\condLang4$ uses a filtering regular expression $\cF^4_\I$:
    \begin{align*}
        \cF^4_\I &\coloneqq
        \wCB \concat \wPref{\wOutD{N_n^*}} \concat \cT \concat \wOutD{\wSuf{N_n^*} \concat \wA} \concat \wCE \concat \Sigma_\I^* \\
        \condLang4 &\coloneqq \set{w \in \Sigma_\I^* \suchthat \comb{n} \concat \wA \in \cF^4_\I(v) \text{ for every proper suffix $v$ of $w$ such that $v[1] = \wCB$}}
    \end{align*}
\end{cond}

\begin{cond}
    [prefix of a cell and first symbols of following cells' suffixes form a $\comb{n}$]
    \label{cond:cells-prefix-and-first-letters-of-suffixes-form-a-comb}
    \begin{align*}
        \cF^5_\I &\coloneqq
        \wCB \concat \wRedC{\wOutD{N_n^* }} \concat \cT \concat \wBlueC{\wOutD{N_n} \concat N_n^*} \concat \wA \concat \wCE
        \bracket{\wCB \concat \wRedC{N_n^* } \concat \cT \concat \wBlueC{\wOutD{N_n} \concat N_n^*} \concat \wA \concat \wCE}^*
        \concat \wCB \concat \wRedC{N_n^* } \concat \cT \concat \wBlueC{\wOutD{N_n} \concat N_n^*} \concat \wOutD{\wA} \concat \wCE
        \concat \wRE \concat \Sigma_\I^* \\
        \condLang{5} &\coloneqq \set{w \in \Sigma_\I^* \suchthat \comb{n} \concat \wA \in \cF^5_\I(v) \text{ for every proper suffix $v$ of $w$ such that $v[1] = \wCB$}}
    \end{align*}
\end{cond}

\begin{cond}[tile colours match vertically]
    \label{condition:last}
    \label{cond:colours-match-vertically}
    Let ${\blacktriangledown_t} \coloneqq \set{t' \in \cT \suchthat \tTop{t'} = \tBottom{t}}$ be the set of tiles whose top colour matches the bottom colour of tile $t$.
    Define
    \newcommand{\hlp}[1]{\makebox[4mm][c]{#1}}
    \begin{align*}
        \cF^6_\I \coloneqq
        \textstyle\biglanguagesum_{t \in \cT}
        \Big(\mspace{147mu}\wCB \concat \wPref*{\wOutD{N_n^*}} \concat &\hlp{$t$} \wSuf*{\concat N_n^*} \concat  \wA \concat \wCE
        \concat \bracket[\big]{\wCB \concat \wPref*{N_n^*}  \concat \cT \concat  \wSuf*{N_n^*} \concat \wA \concat \wCE}^* \concat \wRE
        \cdot {} \\
        \wRB \bracket[\big]{\wCB \concat \wPref*{N_n^*}  \concat \cT \concat  \wSuf*{N_n^*} \concat  \wA \concat \wCE}^* \concat
        \wCB \concat \wPref*{N_n^*} \concat &\hlp{$\blacktriangledown_t$} \concat \wOutD{\wSuf*{N_n^*} \concat \wA} \concat \wCE \concat
        \bracket[\big]{\wCB \concat \wPref*{N_n^*}  \concat \cT \concat  \wSuf*{N_n^*} \concat  \wA \concat \wCE}^* \concat  \wRE \concat \Sigma_\I^* \Big)\,.
    \end{align*}
    The expression above was typeset in two lines only to highlight the correspondence between cells in two consecutive rows.
    Define the language $\condLang{6}$ as
    \begin{align*}
        \condLang{6} \coloneqq {} &\big\{w \in \Sigma_\I^* \:\big\vert\: \comb{n} \concat \wA \in \cF^6_\I(v) \text{ for every proper suffix $v$ of $w$ such that} \\
        &\text{\phantom{$\big\{w \in \Sigma_\I \:\big\vert\:$}$v[1] = \wCB$ and $v[j] = \wRB$ for some $j\hspace{44mm}$}\big\}
    \end{align*}
    Intuitively, requiring $\wRB$ to appear in $v$ ensures that
    $v$ does not begin in the last row.    
\end{cond}
Observe that $\regexToAutomaton{\cF^4_\I}$ and $\regexToAutomaton{\cF^5_\I}$ have a constant number of states, and $\regexToAutomaton{\cF^6_\I}$ has $\BigO{\size{\cT}}$ states. Define $L_\I \coloneqq \wEncodingPrefix \concat \bigcap_{1 \leq
i \leq 6} \condLang{\kern1pt i}$. Let us recall:
\correctEncodingsDesignChoice*
This lemma directly follows from:
\begin{lem}
    \label{lem:correct-encodings-design-choice-more-precise}
    $L_\I = \valEncLang$.
\end{lem}
\begin{proof}
    The inclusion $L_\I \supseteq \valEncLang$ is trivial.
    \proofCase{Inclusion $L_\I \subseteq \valEncLang$}
    Take any $u \in L_\I$.
    Due to \cref{condition:rows-of-cells-structure}, it has the form
        $\wEncodingPrefix \concat \biglanguageconcat_{1 \le i \le h} \wRB \concat v_i \concat \wRE$, 
    for some $h \in \N$, where each $v_i \in (\Sigma_\I \setminus \set{\wRB, \wRE})^*$.
    We will show that $\wRB \concat v_i \concat \wRE$ is a proper 
    encoding of a row under $\encRow{}$.
    Fix an arbitrary $i$.
    Again due to \cref{condition:rows-of-cells-structure}, $v_i$ has the form
    \begin{align*}
        \textstyle\biglanguageconcat_{1 \le j \le w_i}\bracket{\wCB \concat \wPref*{p_{i,j}} \concat t_{i,j} \concat \wSuf*{s_{i,j}} \concat \wA \concat \wCE}\,,
    \end{align*}
    where $w_i \in \N$, $p_{i,j}, s_{i,j} \in N_n^*$, $p_{i,1} = s_{i,w_i} = n$, and $t_{i,j} \in \cT$.
    Due to \cref{cond:cells-prefix-and-first-letters-of-suffixes-form-a-comb}, we have that all $s_{i,j}$ are nonempty and
    \begin{align}
        \wPref*{p_{i,1}} \concat \wSuf*{ s_{i,1}[1] \concat s_{i,2}[1] \concat s_{i,3}[1] \cdots s_{i,w_i}[1]} = \comb{n}\,. \label{eq:letters-form-comb}
    \end{align}
    This implies that $w_i = 2^n$.
    By \cref{cond:cells-prefix-and-first-letters-of-suffixes-form-a-comb,eq:letters-form-comb} we get that
        $p_{i,j}  = \comb{n}[1,j]$, 
    and now \cref{cond:cells-contain-combs} implies that $s_{i,j} = \comb{n}[j+1,2^n+1]$, so $\wRB \concat v_i \concat \wRE$ is a valid encoding of a row of length $2^n$.
    Hence $u$ encodes a tiling $T \coloneqq \myMatrix{t_{i,j}}_{i,j} \in \cT^{h \times 2^n}$.
    Property~\eqref{eq:tiling-property-b} in the definition of a valid tiling is now trivially implied by \cref{cond:colours-match-horizontally}, and we only need to show~\cref{eq:tiling-property-a}.
    Fix an arbitrary pair of tiles $t_{i,j}, t_{i+1,j}$ that are vertical neighbours.
    Observe that $p_{i,j} s_{i+1,x} = \comb{n} \iff x = j$.
    Therefore, by \cref{cond:colours-match-vertically}, $\tBottom{t_{i,j}} = \tTop{t_{i+1,j}}$, thus $T$ is a valid $\cT$-tiling, and, by \cref{condition:corners-fixed}, a valid $(\cT, \toplefttile, \bottomrighttile)$-tiling.
\end{proof}

        \subsection[Construction of the automaton A\_I]{Construction of the automaton \texorpdfstring{\boldmath$\cA_{\I}$}{A\_I}} \label{subsec:construction-of-the-automaton}
        Let $\Sigma_{\I,\wSharp} \coloneqq \Sigma_\I \cup \set{\wSharp}$.
Here, we define the \NFA $\cA_\I$ over $\Sigma_{\I,\wSharp}^2$ and prove 
\typeout{HEREABCQ}\cref{lem:projection-of-relation-equals-valid-encodings}, which states that $\univProj[1]{\lang{\cA_\I}} = \encLang$. 
The construction we present in this section, however, does not require the full generality of the setting of automatic structures:
\begin{itemize}
    \item $\automatonToRelation{\cA_\I}$ only holds for words of the same length, \ie, $\cA_\I$ rejects words with $\wSharp$;
    \item we only use a subset of the alphabet: $\Sigma_\I \times N_n \subset \Sigma_{\I,\wSharp}^2$.
\end{itemize}
For this reason, we begin with a lemma, which allows us to focus only on words satisfying the above properties.
Let $\rhoProj{}\colon (\Sigma_\I \times N_n)^* \to \Sigma_\I^*$ be a homomorphism given by $\rhoProj{a, \argumentDot} \coloneqq a$.
Additionally, let
\begin{align*}
    \langForall{L} &\coloneqq \set{ w \in \Sigma_\I^* \suchthat \rhoInv{\hiddenset{w}} \subseteq L}\,.
\end{align*}

\begin{lem}
    \label{lem:simplification-of-construction}
    For any \NFA $\cA'_\I$ over $\Sigma_\I \times N_n$
    with $\BigO{n\cdot \size{\cT}}$ states,
    there exists an \NFA $\cA_\I$ over $\Sigma_{\I,\wSharp}^2$ with $\BigO{n\cdot \size{\cT}}$ states
    such that $\univProj[1]{\lang{\cA_\I}} = \langForall{\lang{\cA'_\I}}$.
\end{lem}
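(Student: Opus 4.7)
My plan is to take $\cA_\I$ to be the union of two simple NFA: one that ``vacuously accepts'' every word of $(\Sigma_{\I,\wSharp}^2)^*$ which is not a convolution $u \otimes v$ with $u, v \in \Sigma_\I^*$ of equal length and $v \in N_n^*$, and a second one that, on the remaining words, simulates $\cA'_\I$. The key observation is that universal projection onto the first coordinate quantifies universally over $v \in \Sigma_\I^*$, so we are free to pick any behaviour on the ``malformed'' $v$'s provided that we accept them; this freedom is exactly what lets us reduce the nontrivial constraint to the case in which $u \otimes v$ lives in $(\Sigma_\I \times N_n)^*$.

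Concretely, I would first build a two-state NFA $\cA^{\mathrm{bad}}_\I$ over $\Sigma_{\I,\wSharp}^2$ accepting exactly those words that contain at least one letter outside $\Sigma_\I \times N_n$ (note that this includes the padding symbol $\wSharp$ in either coordinate, as well as every letter whose second coordinate lies in $\Sigma_\I \setminus N_n$). Second, I would view $\cA'_\I$ as an NFA $\cA^{\mathrm{good}}_\I$ over the enlarged alphabet $\Sigma_{\I,\wSharp}^2$ via the inclusion $\Sigma_\I \times N_n \hookrightarrow \Sigma_{\I,\wSharp}^2$, keeping all transitions unchanged; its language as a subset of $(\Sigma_{\I,\wSharp}^2)^*$ remains $\lang{\cA'_\I}$. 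Finally I would set $\cA_\I \coloneqq \cA^{\mathrm{bad}}_\I \oplus \cA^{\mathrm{good}}_\I$, which is linear in $\size{\cA'_\I}$ by the union closure fact.

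To verify $\univProj[1]{\lang{\cA_\I}} = \langForall{\lang{\cA'_\I}}$, I would fix $u \in \Sigma_\I^*$ and analyse the condition ``$u \otimes v \in \lang{\cA_\I}$ for every $v \in \Sigma_\I^*$'' case-by-case on $v$. If $\size{v} \neq \size{u}$, then $u \otimes v$ contains a $\wSharp$ and thus lies in $\lang{\cA^{\mathrm{bad}}_\I}$; similarly if $\size{v} = \size{u}$ but some $v[i] \notin N_n$. The only non-vacuous case is $v \in N_n^*$ with $\size{v} = \size{u}$, in which $u \otimes v \in (\Sigma_\I \times N_n)^*$ identifies naturally with the word $(u[1], v[1]) \cdots (u[\size{u}], v[\size{u}]) \in \rhoInv{u}$, and acceptance by $\cA_\I$ coincides with acceptance by $\cA'_\I$. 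Hence universal quantification over $v$ becomes precisely the defining condition $\rhoInv{u} \subseteq \lang{\cA'_\I}$ of $\langForall{\lang{\cA'_\I}}$.

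There is no serious obstacle here; this is essentially a packaging lemma, letting subsequent sections work in the simpler alphabet $\Sigma_\I \times N_n$ instead of $\Sigma_{\I,\wSharp}^2$. The only mild point requiring care is the uniform treatment of mismatched-length $v$'s and of $v$'s with letters outside $N_n$; both are swept into the single easy test ``some letter lies outside $\Sigma_\I \times N_n$'', which works because $\wSharp \notin \Sigma_\I$ by construction.
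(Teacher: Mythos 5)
Your proposal is correct and takes essentially the same approach as the paper: the paper likewise forms $\cA_\I$ as a union of $\cA'_\I$ with small automata accepting every ``malformed'' convolution, though it splits the malformed case into two regular expressions (one for mismatched lengths, one for a second component outside $N_n$) rather than your single two-state NFA detecting any letter outside $\Sigma_\I \times N_n$. The two criteria are equivalent on convolutions $u \otimes v$ with $u, v \in \Sigma_\I^*$, so this is only a cosmetic difference in packaging.
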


\begin{proof}
    Take any $\cA'_\I$ over $\Sigma_\I \times N_n$.
    Let
    \begin{align*}
        \cE_1 &\coloneqq \bracket{\Sigma_\I^2}^* \bracket{\Sigma_\I \times \set{\wSharp}}^+ + \bracket{\Sigma_\I^2}^* \bracket{\set{\wSharp} \times \Sigma_\I}^+ & &\text{\scriptsize($u \otimes v$ such that $\size u \neq \size v$)}\\
        \cE_2 &\coloneqq \bracket{\Sigma_\I^2}^* \bracket{\Sigma_\I \times \bracket{\Sigma_\I \setminus N_n}} \bracket{\Sigma_\I^2}^* & &\text{\scriptsize(words with letter from $\Sigma_\I^2 \setminus (\Sigma_\I \times \N_n$))}\\
        \cA_\I &\coloneqq \cA'_\I \oplus \regexToAutomaton{\cE_1} \oplus \regexToAutomaton{\cE_2}\,.
    \end{align*}
    By definition, a word $w$ belongs to $\univProj[1]{\lang{\cA_\I}}$ whenever for all $v$ the word $w \otimes v$ belongs to $\lang{\cA_\I}$.
    By construction, $\lang{\cA_\I}$ contains all $w \otimes v$ where $\size{v} \neq \size{w}$ ($\cE_1$) or where $v$ is using a symbol from $\Sigma_\I \setminus N_n$ ($\cE_2$).
    Hence, the only words which can be missing from $\lang{\cA_\I}$ come from $\lang{\cA'_\I}$.
    This implies that $\univProj[1]{\lang{\cA_\I}} = \langForall{\lang{\cA'_\I}}$.
\end{proof}
Therefore, we only have to provide $\cA'_\I$ such that $\langForall{\lang{\cA'_\I}} = \encLang$.
The construction is modular, based on six \NFA corresponding to~\crefrange{condition:first}{condition:last}:

\begin{lem}[modular design]
    \label{lem:construction-of-conjunction}
    For any six \NFA $(\cC^i_\I)_{1 \le i \le 6}$ 
    over $\Sigma_\I \times N_n$
    with $\BigO{n\cdot \size{\cT}}$ 
    states,
    there exists an \NFA $\cA'_\I$ 
    with $\BigO{n \cdot \size{\cT}}$ states
    over $\Sigma_\I \times N_n$ such that
    \begin{align*}
        \langForall{\lang{\cA'_\I}} = \wA \! \bigcap_{1 \leq i \leq 6} \langForall{\lang{\cC^i_\I}}\,.
    \end{align*}
\end{lem}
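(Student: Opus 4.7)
The plan is to use the first letter of the universally-quantified second coordinate as a dispatcher selecting which of the six sub-conditions is being verified. This turns what is morally an intersection of the languages $\lang{\cC^i_\I}$ (which at the \NFA level would require a quadratic product construction) into a lightweight union of \NFA, whose size stays linear in $\sum_i \size{\cC^i_\I}$.

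Concretely, since the paper assumes $n \geq 6$, we have $\size{N_n} = n+1 \geq 7$, so we may fix a partition $N_n = P_1 \sqcup P_2 \sqcup \cdots \sqcup P_6$ into six nonempty parts. For each $i$ I would build an \NFA $\cB^i$ of size $O(\size{\cC^i_\I})$ recognising
\[
    \lang{\cB^i} \coloneqq \set[\big]{(\wA, b)\cdot u \suchthat b \in P_i,\; u \in \lang{\cC^i_\I}}\,,
\]
by prepending to $\cC^i_\I$ a single fresh initial state with $\size{P_i}$ transitions labelled $(\wA, b)$, one per $b \in P_i$, each leading into the initial states of $\cC^i_\I$. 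Finally, I would set $\cA'_\I \coloneqq \cB^1 \oplus \cdots \oplus \cB^6$ using the \NFA-union fact; the resulting automaton has size $O\bigl(\sum_i \size{\cC^i_\I}\bigr)$.

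The correctness argument is then just a careful rewriting of the universal quantifier. A word $w$ lies in $\langForall{\lang{\cA'_\I}}$ iff every lift in $\rhoInv{w}$ is accepted by $\cA'_\I$. Because every word in $\lang{\cA'_\I}$ begins with a pair $(\wA, b)$, this forces $w = \wA\,w'$ for some $w'$. Decomposing an arbitrary lift as $(w, v) = (\wA, b)\cdot(w', v')$, the letter $b$ lies in exactly one part $P_i$, and by construction $(w, v) \in \lang{\cA'_\I}$ iff $(w', v') \in \lang{\cC^i_\I}$. Quantifying $b$ and $v'$ universally, and using that each $P_i$ is nonempty so that every index $i$ is actually hit, yields $w' \in \langForall{\lang{\cC^i_\I}}$ for every $i$. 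This establishes the identity $\langForall{\lang{\cA'_\I}} = \wA\cdot\bigcap_{i} \langForall{\lang{\cC^i_\I}}$; the reverse inclusion is by direct inspection.

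The main conceptual obstacle is to resist the temptation to build $\cA'_\I$ as a direct intersection of the $\cC^i_\I$: this would require a product construction and a quadratic blow-up, violating the stated size bound. Exploiting the universally-quantified coordinate to silently dispatch across the components is what lets us convert a conjunction at the $\langForall$-level into a disjunction at the \NFA-level. The only combinatorial requirement is $\size{N_n} \geq 6$, which is guaranteed by the standing assumption $n \geq 6$.
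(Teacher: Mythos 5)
Your construction is correct and follows essentially the same idea as the paper's proof: use the $N_n$-coordinate of the first letter as a dispatcher, so that the intersection at the $\langForall{\argumentDot}$ level is realised by a union of \NFA, keeping the size linear. The only cosmetic difference is that the paper dispatches on the singletons $\w1, \dots, \w6$ and adds a separate catch-all component $\cH$ for the unused values of $N_n$, whereas you partition all of $N_n$ into six nonempty blocks so no catch-all is needed.
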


\begin{proof}
    Define
    \begin{align*}
        \cH &\coloneqq \bracket{\set{\wA} \times N_n \setminus \set{\w1, \w2, \dots, \w6}} \concat \bracket{\Sigma_\I \times N_n}^* \\
        \cA'_\I &\coloneqq
        \regexToAutomaton{(\wA, \w1)} \odot \cC^1_\I \ \ \oplus \ \ \regexToAutomaton{(\wA, \w2)} \odot \cC^2_\I \ \ \oplus \ \ \cdots \ \ \oplus \ \ \regexToAutomaton{(\wA, \w6)} \odot \cC^6_\I \ \ \ \oplus \ \ \ \regexToAutomaton{\cH}\,.
    \end{align*}
    Observe that
    \begin{align*}
        \wA w \in \langForall{\lang{\cA'_\I}}
        &\iff
        \rhoInv{\hiddenset{\wA w}} \subseteq \lang{\cA'_\I}
        \iff
        (\set{\wA} \times N_n) \concat \rhoInv{\hiddenset{w}} \subseteq \lang{\cA'_\I} \iff {} \\
        &\iff \forall{i \in N_n}. (\wA, i) \concat \rhoInv{\hiddenset{w}} \subseteq \lang{\cA'_\I}\,,
    \end{align*}
    but trivially
    \begin{align*}
        \lang{\regexToAutomaton{(\wA, j)} \odot \cC^j_\I} \cap (\wA, i) \concat \rhoInv{\hiddenset{w}} &= \emptyset & &\text{for any $1\le i,j\le 6$, $i \neq j$} \\
        \lang{\regexToAutomaton{\cH}} \cap (\wA, i) \concat \rhoInv{\hiddenset{w}} &= \emptyset & &\text{for any $1\le i\le 6$.}
    \end{align*}
    Therefore, $\wA w \in \langForall{\lang{\cA'_\I}} $ if, and only if, $\rhoInv{\hiddenset{w}} \subseteq \langForall{\lang{\cC^i_\I}}$ for all $1\le i \le 6$.
\end{proof}
By the definition of $\encLang$, it only remains to construct automata $\cC^i_\I$ such that $\langForall{\lang{\cC^i_\I}} = \condLang{i}$ for $1 \le i \le 6$.
The construction is easy for~\crefrange{condition:rows-of-cells-structure}{cond:colours-match-horizontally}:
\begin{align*}
    \cC^i_\I &\coloneqq \rhoInv{\cB^i_\I} & &\text{for $i \in \set{1,2,3}$}
\end{align*}
as $\langForall{\lang{\rhoInv{\cA}}} = \lang{\cA}$ for any \NFA $\cA$.
Observe that the number of states of all $\cC_\I^i$ is upper bound
by $\BigO{\cT}$.
The remaining~\crefrange{cond:cells-contain-combs}{cond:colours-match-vertically} all talk about ``every proper suffix'' satisfying some simple regular property.
We handle that in a general way.
For $L \subseteq \bracket{\Sigma_\I \times N_n}^*$, define
\begin{align*}
    \allSufLang{L} \coloneqq \set{ w \suchthat v \in \langForall{L} \text{ for all proper suffixes }  v \text{ of } w }
\end{align*}

\begin{lem}[recognising ``for all proper suffixes'']
    For any \NFA $\cA$ over $\Sigma_\I \times N_n$ with $k$ states, there exists an \NFA $\allSufAut{\cA}$ with $\BigO{k}$ states such that
    \begin{align*}
        \langForall{\lang{\allSufAut{\cA}}} = \allSufLang{\lang{\cA}}\,.
    \end{align*}
\end{lem}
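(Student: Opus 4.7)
\emph{Plan.} I will build an NFA $\allSufAut{\cA}$ whose language $M \coloneqq \lang{\allSufAut{\cA}}$ consists of exactly those $\hat{w} \in (\Sigma_\I \times N_n)^*$ for which one of the following three alternatives holds: $(A)$ no letter of $\hat{w}$ carries $\w{0}$ in the second coordinate; $(B)$ the last letter of $\hat{w}$ is the only one with second coordinate $\w{0}$ and is preceded by letters with non-zero second coordinates; or $(C)$ $\hat{w}$ factors as $\hat{u} \concat (a, \w{0}) \concat \hat{v}$ with $\hat{u} \in (\Sigma_\I \times (N_n \setminus \set{\w{0}}))^*$, $a \in \Sigma_\I$, $\size{\hat{v}} \ge 1$, and $\hat{v} \in \lang{\cA}$. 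The key trick is that $\w{0}$ plays the role of a \emph{suffix-start marker}: universally quantifying over all expansions $\hat{w}$ of a fixed $w$ forces the position $k$ of the first $\w{0}$ to range over all of $\set{1, \dots, \size{w}}$, and independently forces the letters after it to range over every expansion of the proper suffix $w[k+1, \size{w}]$.

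\emph{The automaton.} Take state set $Q_\cA \cup \set{q_0, q_e}$ with sole initial state $q_0$ and accepting states $Q_\cA^\fin \cup \set{q_0, q_e}$. Add the self-loops $q_0 \xrightarrow{(a, i)} q_0$ for every $a \in \Sigma_\I$ and every $i \in N_n \setminus \set{\w{0}}$, together with the transitions $q_0 \xrightarrow{(a, \w{0})} q_e$ and $q_0 \xrightarrow{(a, \w{0})} q'$ for every $a \in \Sigma_\I$ and every initial state $q'$ of $\cA$; keep all internal transitions of $\cA$. An accepting run stays at $q_0$ while consuming the non-zero prefix; on the first $\w{0}$-letter it non-deterministically either halts at $q_e$ (covering case $(B)$) or enters $\cA$ to verify case $(C)$; if $\w{0}$ never appears, the run terminates at $q_0$ (case $(A)$). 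Only two fresh states are added, so $\size{\allSufAut{\cA}} = \BigO{\size{\cA}}$.

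\emph{Verification and main obstacle.} I would prove $\langForall{M} = \allSufLang{\lang{\cA}}$ by contrapositive in both directions. If $w \notin \allSufLang{\lang{\cA}}$, fix a proper suffix $v = w[k+1, \size{w}]$ and an expansion $\hat{v} \notin \lang{\cA}$; the expansion of $w$ pairing positions $1, \dots, k-1$ with arbitrary non-zero numbers, position $k$ with $\w{0}$, and carrying $\hat{v}$ on the tail violates $(A)$, $(B)$ and $(C)$, hence lies outside $M$, so $w \notin \langForall{M}$. Conversely, any $\hat{w} \notin M$ projecting to $w$ has its first $\w{0}$ at some position $k < \size{\hat{w}}$ with $\hat{w}[k+1, \size{\hat{w}}] \notin \lang{\cA}$, which directly witnesses that $w[k+1, \size{w}]$ is a proper suffix of $w$ with a bad expansion. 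The delicate step is to check that the outer $\forall$ over expansions $\hat{w} \in \rhoInv{w}$ really decomposes into the joint, \emph{independent} quantification ``$\forall k \in \set{1, \dots, \size{w}}$, $\forall$ expansion of $w[k+1, \size{w}]$'': this works because, once $k$ is fixed, the numbers at positions $1, \dots, k-1$ (constrained only to be non-zero) and the tail can be chosen independently. The vacuous edge cases $\size{w} \le 1$ are handled automatically by cases $(A)$ and $(B)$.
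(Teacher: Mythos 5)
Your construction is correct and follows essentially the same idea as the paper's: a fresh state waits at the beginning of the word and non-deterministically uses a distinguished second-coordinate value to mark the position just before a proper suffix, then hands control to $\cA$. The main cosmetic difference is that you trigger on second coordinate $0$ whereas the paper triggers on a non-zero second coordinate; in addition, your dead accepting state $q_e$ correctly absorbs the expansion whose marker lands on the \emph{last} letter (so that the ``suffix'' to verify is empty and thus not a proper suffix) — an edge case that the paper's single-state construction glosses over, since there the run would end in a state of $Q_\ini$ and be rejected unless $\varepsilon \in \lang{\cA}$.
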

\begin{proof}
    Fix any \NFA $\cA = (Q, \Sigma_\I \times N_n, \delta, Q_\ini, Q_\fin)$.
    We define $\allSufAut{\cA}$ which guesses the suffix to verify
    \begin{align*}
        \allSufAut{\cA} \coloneqq (Q \cup \set{s}, \Sigma_\I \times N_n, \delta \cup \delta', \set{s}, Q_\fin \cup \set{s})
    \end{align*}
    for some fresh state $s \notin Q$, and $\delta'$ containing transitions
    \smash{$s \xrightarrow{(a, 0)} s$ for $a \in \Sigma_\I$ and
        $s \xrightarrow{(a, i)} q$} for $a \in \Sigma_\I$, $i \in N_n^{>0}$, $q \in Q_\ini$.
    Also, let $\tau$ be the homomorphism such that $\tau(a) \coloneqq (a, 0)$.

    \proofCase{Inclusion ``$\subseteq$''}
    Take any $w \in \langForall{\lang{\allSufAut{\cA}}}$.
    Let $v$ be any proper suffix of $w$.
    Take any $v' \in \rhoInv{\hiddenset{v}}$.
    We need to show that $v' \in \lang{\cA}$.
    The word $w$ can be written as $u a v$, for $\size{u} \ge 0$ and $\size{a} = 1$.
    Consider a word $w' = \tau(u) (a, 1) v'$.
    By definition of $\langForall{}$, $w' \in \lang{\allSufAut{\cA}}$.
    Let $r$ be an accepting run of $\allSufAut{\cA}$ over $w'$.
    By construction, the run stays in state $s$ while reading $\tau(u)$ and goes to some $q \in Q_\ini$ upon reading $(a,1)$.
    Therefore, the remaining suffix of $r$ is an accepting run of $\cA$ over $v'$.

    \proofCase{Inclusion ``$\supseteq$''}
    Fix $w \in \allSufLang{\lang{\cA}}$.
    Take any $w' \in \rhoInv{\hiddenset{w}}$.
    We will show that $w' \in \lang{\allSufAut{\cA}}$.
    Let $u' (a, k) v' \coloneqq w'$ be such that $u'$ is the maximal prefix arising as $\tau(u)$ for some $u$ (possibly empty).
    Note that $k \neq 0$.
    By assumption, $v' \in \lang{\cA}$, so there exists an accepting run $r_2$ of $\cA$ over $v'$ starting in some $q \in Q_{\ini}$.
    By construction, there exists a run $r_1$ from $s$ to $q$ over $u' (a, k)$ in $\allSufAut{\cA}$.
    Hence, the run $r_1 r_2$ accepts $w'$.
\end{proof}
To handle conditions ``beginning with $\wCB$'' and ``containing $\wRB$'' appearing as antecedents of implications, we proceed in the vein of the equivalence $a \rightarrow b \equiv \neg a \lor b$.
Let
\begin{align*}
    \cG_{\neg\wCB} &\coloneqq \bracket{\Sigma_{\I} \setminus \set{\wCB}} \concat \Sigma_{\I}^* &
    \cG_{\neg\wRB} &\coloneqq \bracket{\Sigma_{\I} \setminus \set{\wRB}}^*\,.
\end{align*}

\begin{lem}
    \label{lem:suffices-to-define-inner-condition}
    For any $i\in \set{4,5,6}$, given an \NFA $\hat\cC^i_\I$
    of size $O\big(n \cdot \size{\cT})$ and satisfying \[\langForall[][\big]{\lang[][][\big]{\hat\cC^i_\I}} = \set{ w \suchthat \comb{n} \concat \wA \in \cF^i_\I(w) }\,,\] one can construct $\cC^i_\I$ with $O\big(n \cdot \size{\cT})$ 
    states such that
    $
    \langForall{\lang{\cC^i_\I}} = \condLang{i}
    $.
\end{lem}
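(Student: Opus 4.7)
The plan is to convert each implicative universal condition defining $\condLang{i}$ into a disjunctive form via $a \to b \equiv \lnot a \lor b$, and then apply the ``for all proper suffixes'' lemma (recognising $\allSufLang{\argumentDot}$) on top. Concretely, I would construct an intermediate \NFA $\cD^i_\I$ over $\Sigma_\I \times N_n$ satisfying
\[
    \langForall{\lang{\cD^i_\I}} \;=\; L^i_\I \;\coloneqq\; \set[\big]{v \suchthat \lnot P_i(v) \text{ or } \comb{n}\concat\wA \in \cF^i_\I(v)}\,,
\]
where $P_i(v)$ is the antecedent inside $\condLang{i}$: the condition ``$v[1] = \wCB$'' for $i \in \set{4,5}$, and ``$v[1] = \wCB$ and $v[j] = \wRB$ for some $j$'' for $i = 6$. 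Once such a $\cD^i_\I$ is in hand, setting $\cC^i_\I \coloneqq \allSufAut{\cD^i_\I}$ and appealing to the previous lemma yields $\langForall{\lang{\cC^i_\I}} = \allSufLang{L^i_\I}$, which unfolds directly to $\condLang{i}$.

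To realise $\cD^i_\I$, I would form $\oplus$-unions of $\hat\cC^i_\I$ with appropriately lifted copies of $\cG_{\neg\wCB}$, and, in the case $i=6$, also of $\cG_{\neg\wRB}$:
\begin{align*}
    \cD^i_\I &\coloneqq \rhoInv{\regexToAutomaton{\cG_{\neg\wCB}}} \oplus \hat\cC^i_\I & &\text{for } i \in \set{4,5}, \\
    \cD^6_\I &\coloneqq \rhoInv{\regexToAutomaton{\cG_{\neg\wCB}}} \oplus \rhoInv{\regexToAutomaton{\cG_{\neg\wRB}}} \oplus \hat\cC^6_\I.
\end{align*}
The verification of $\langForall{\lang{\cD^i_\I}} = L^i_\I$ reduces to one general identity: for any $L \subseteq \Sigma_\I^*$ and $M \subseteq \bracket{\Sigma_\I \times N_n}^*$, one has
\[
    \langForall{\rhoInv{L} \cup M} \;=\; L \cup \langForall{M}\,,
\]
because $\rhoInv{\set{w}} \subseteq \rhoInv{L}$ iff $w \in L$, while for $w \notin L$ the fibre $\rhoInv{\set{w}}$ is disjoint from $\rhoInv{L}$ and hence has to lie entirely in $M$. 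Iterating this identity once (for $i \in \set{4,5}$) or twice (for $i=6$) dispatches the ``negated antecedent'' disjuncts and leaves the assumed property of $\hat\cC^i_\I$.

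The size accounting is routine: $\cG_{\neg\wCB}$ and $\cG_{\neg\wRB}$ have constant size, and each of $\regexToAutomaton{\argumentDot}$, $\rhoInv{\argumentDot}$, $\oplus$, and $\allSufAut{\argumentDot}$ introduces only linear overhead, giving $\size{\cC^i_\I} = \BigO{\size{\hat\cC^i_\I}}$. The main conceptual point I expect to flag is that $\langForall{\argumentDot}$ does not distribute over $\cup$ in general; using $\rhoInv$ to lift each antecedent language to the full alphabet $\Sigma_\I \times N_n$ is precisely what rescues the disjunctive reformulation and makes the identity above go through. With that observation in place, the rest of the argument is formal bookkeeping.
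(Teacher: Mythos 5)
Your proposal is correct and follows essentially the same route as the paper: form the $\oplus$-union of $\hat\cC^i_\I$ with the $\rho^{-1}$-lift of the negated-antecedent regular expressions, then wrap the result in $\allSufAut{\argumentDot}$. The only difference is cosmetic (you spell out the key identity $\langForall{\rhoInv{L}\cup M} = L \cup \langForall{M}$ that the paper uses implicitly, and you split $\cG_{\neg\wCB}+\cG_{\neg\wRB}$ into two $\rho^{-1}$-lifted summands rather than one), and your verification of that identity is sound.
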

\begin{proof}
    Fix $\hat\cC^4_\I, \hat\cC^5_\I, \hat\cC^6_\I$ as in the statement of the lemma.
    We define $\cC^i_\I$ as
    \begin{align*}
        \cC^4_\I &\coloneqq \allSufAut[][][\big]{\hat\cC^4_\I \oplus \rhoInv[\big]{\regexToAutomaton{\cG_{\neg\wCB}}}} \\
        \cC^5_\I &\coloneqq \allSufAut[][][\big]{\hat\cC^5_\I \oplus \rhoInv[\big]{\regexToAutomaton{\cG_{\neg\wCB}}}} \\
        \cC^6_\I &\coloneqq \allSufAut[][][\big]{\hat\cC^6_\I \oplus \rhoInv[\big]{\regexToAutomaton{\cG_{\neg\wCB} + \cG_{\neg\wRB}}}}\,.
    \end{align*}
    The above cases are similar;
    \abbrev{w.l.o.g.} let us focus on $\cC^4_\I$.
    Observe that
    \begin{align*}
        \langForall[][\big]{
            \lang[][][\big]{\hat\cC^4_\I \oplus \rhoInv[\big]{\regexToAutomaton{\cG_{\neg\wCB}}}}
        }
        =
        \langForall[][\big]{
            \lang[][][\big]{\hat\cC^4_\I}
        }
        \cup
        \lang[][][\big]{
            \cG_{\neg\wCB}
        }
        =
        \set{ w \suchthat \comb{n} \concat \wA \in \cF^4_\I(w) } \cup
        \lang[][][\big]{
            \cG_{\neg\wCB}
        }\,,
    \end{align*}
    which directly corresponds to \cref{cond:cells-contain-combs}, as required.
    Observe that the arguments to $\allSufAut{\cdot}$ are \NFA
    with $O(n\cdot \size{\cT})$ states.
\end{proof}
The essential element needed to define \NFA $\hat\cC^i_\I$ as in \cref{lem:suffices-to-define-inner-condition} is an \NFA for the language $\set{ \comb{n} \wA }$.
First, we define $\comb{n}$ as an intersection of languages of $n+1$ regular expressions. We then show how that intersection can be concisely represented by an \NFA $\cC_n$ with $\BigO{n}$ states such that 
    $\langForall{\lang{\cC_n}} = \set{ \comb{n} \wA }$. 

\begin{defi}
    We define $n+1$ regular expressions $\cE_i$ over $\Sigma_\I$
    \begin{align*}
        \cE_0 &\coloneqq N^{>1}_n \concat \bracket{\w0 \concat N^{>1}_n}^* \\
        \cE_i &\coloneqq N^{>i}_n \concat \bracket{\bracket{N^{<i}_n}^* \concat i \concat \bracket{N^{<i}_n}^* \concat N^{>i}_n}^*
        &
        \text{ for $0 < i < n$}\\
        \cE_n &\coloneqq n \concat \bracket{N^{<n}_n}^* \concat n\,.
    \end{align*}
\end{defi}
\begin{lem}
    \label{lem:comb-singleton-is-the-intersection-of-Es}
    $\set{ \comb{n} } = \bigcap_{0 \le i \le n} \lang{\cE_i}$.
\end{lem}
\begin{proof}
    It is easy to prove the inclusion ``$\subseteq$'' by unravelling the definition of $\comb{n}$.

    \proofCase{Inclusion ``$\supseteq$''}
    Take any $w \in \bigcap_{0 \le i \le n} \lang{\cE_i}$.
    We will show that $w = \comb{n}$.
    \begin{clm}
        For $0 \le k \le n-1$, we have $\bigcap_{0 \le i \le k} \lang{\cE_i} = \lang{N_n^{>k} \concat \bracket{\combInner{k} \concat N_n^{>k}}^*}$.
    \end{clm}
    We prove the claim by induction.
    The base case is trivial.
    Fix a word
    \begin{align*}
        w \in \lang{N_n^{>k} \concat \bracket{\combInner{k} \concat N_n^{>k}}^*} \cap \lang{\cE_{k+1}}\,.
    \end{align*}
    It has the form $w = a_1 \concat \combInner{k} \concat a_2 \concat \combInner{k} \concat \cdots \concat  \combInner{k} \concat a_m$ for some $m \ge 2$ and $a_1, a_2, \ldots, a_m \in N_n^{>k}$.
    But since $w \in \lang{\cE_{k+1}}$, 
    all of $a_{2}, a_4, a_6, \ldots$ are equal to $k+1$, and $m$ is odd.
    Thus
    \begin{align*}
        w = a_1 \concat \underbrace{\combInner{k} \concat (k+1) \concat \combInner{k}}_{\combInner{k+1}} \concat \cdots \concat  \underbrace{\combInner{k} \concat (k+1) \concat \combInner{k}}_{\combInner{k+1}} \concat a_m
    \end{align*}
    We conclude by noticing that $\lang{\cE_n} \cap \lang{N_n^{>(n-1)} \concat \bracket{\combInner{n-1} \concat N_n^{>(n-1)}}^*} = \set{\comb{n}}$.
\end{proof}
Let us define
\[
    \cC_n \coloneqq \rhoInv{\regexToAutomaton{\cE_0}} \odot \regexToAutomaton{(\wA,\w0)} \oplus
    \rhoInv{\regexToAutomaton{\cE_1}} \odot \regexToAutomaton{(\wA,\w1)} \oplus \dots \oplus
    \rhoInv{\regexToAutomaton{\cE_n}} \odot \regexToAutomaton{(\wA,n)}\,.
\]
Observe that each $\regexToAutomaton{\cE_i}$ and $\regexToAutomaton{(\wA,\w i)}$ has a constant number of states, hence by further applying \cref{fact:inv-hom}, $\cC_n$ has $O(n)$ states.

\begin{lem}
    \label{lem:universal-projection-of-Cn-is-the-comb}
    $
    \langForall{\lang{\cC_n}} = \bracket{\bigcap_{0 \le i \le n} \lang{\cE_i}} \concat \wA
    $.
\end{lem}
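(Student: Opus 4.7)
The plan is to unfold both $\langForall$ and $\cC_n$ carefully and verify the two inclusions directly. The key observation to exploit is that $\lang{\cC_n}$ is a disjoint union of $n+1$ languages of the form $\rhoInv{\lang{\cE_i}} \cdot \{(\wA,i)\}$, uniquely distinguished by which value $i \in N_n$ occupies their terminal letter's second component. Hence every word in $\lang{\cC_n}$ has the form $v'(\wA,k)$ with $v' \in \rhoInv{\lang{\cE_k}}$; in particular, its last letter's first component is $\wA$.

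For the inclusion ``$\supseteq$'', fix $v \in \bigcap_{0 \le i \le n} \lang{\cE_i}$ and consider $u \coloneqq v \concat \wA$. Take any $w \in \rhoInv{u}$ and decompose it as $w = v'(\wA,k)$ for some $k \in N_n$ with $\rho(v') = v$. Since $v \in \lang{\cE_k}$, we have $v' \in \rhoInv{\lang{\cE_k}} = \lang{\rhoInv{\regexToAutomaton{\cE_k}}}$, so $w$ is accepted by the $k$-th disjunct of $\cC_n$. As $w$ was arbitrary, $u \in \langForall{\lang{\cC_n}}$.

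For the inclusion ``$\subseteq$'', fix $u \in \langForall{\lang{\cC_n}}$. Since $\rho$ acts as identity on the first component, the first component of the last letter of any $w \in \rhoInv{u}$ coincides with the last letter of $u$; for $w$ to belong to $\lang{\cC_n}$ this letter must be $\wA$, and $u$ is nonempty. Write $u = v \concat \wA$. Now fix any $k \in N_n$ and any lift $v'' \in \rhoInv{v}$; then $w \coloneqq v''(\wA,k) \in \rhoInv{u} \subseteq \lang{\cC_n}$. Because $w$ ends with $(\wA,k)$, the only disjunct of $\cC_n$ that could accept it is the $k$-th one, forcing $v'' \in \rhoInv{\lang{\cE_k}}$ and hence $v = \rho(v'') \in \lang{\cE_k}$. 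Since $k$ was arbitrary, $v \in \bigcap_{0 \le i \le n} \lang{\cE_i}$.

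The proof is essentially a bookkeeping exercise, and no step presents a real obstacle; the only mildly delicate point is the ``uniqueness of the accepting disjunct'' used in the $\subseteq$ direction, which is immediate because $\regexToAutomaton{(\wA,j)}$ accepts a single letter and the disjuncts of $\cC_n$ are distinguished by it.
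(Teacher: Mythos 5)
Your proof is correct and takes essentially the same approach as the paper's: both directions hinge on the observation that the last letter's $N_n$-component uniquely determines which disjunct of $\cC_n$ can accept the word, and both run the same bookkeeping. The only cosmetic difference is that you explicitly justify why every word in $\langForall{\lang{\cC_n}}$ must end in $\wA$, whereas the paper writes $w = u\wA$ without comment; this is a harmless addition rather than a different argument.
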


\begin{proof}
    \proofCase{Inclusion ``$\subseteq$''}
    Take any $w = u \wA \in \langForall{\lang{\cC_n}}$,
    and $i \in N_n$.
    We prove that $u \in \lang{\cE_i}$.
    By definition, $\rhoInv{\hiddenset{u \wA}} \subseteq \lang{\cC_n}$.
    Fix a homomorphism $\tau(a) = (a, 0)$.
    Note that $\tau(u) (\wA, i) \in \lang{\cC_n}$.
    This can be accepted only by the $\rhoInv{\regexToAutomaton{\cE_i}} \odot \regexToAutomaton{(\wA,i)}$ component,
    thus $u \in \lang{\regexToAutomaton{\cE_i}} = \lang{\cE_i}$, as required.

    \proofCase{Inclusion ``$\supseteq$''}
    Take any $w = u \wA \in \bracket[\big]{\bigcap_{0 \le i \le n} \lang{\cE_i}} \concat \wA$.
    Take any $u' (\wA, i) \in \rhoInv{\hiddenset{u\wA}}$.
    Since $u \in \lang{\cE_i}$, $u' \in \rhoInv{\lang{\cE_i}}$, and $u' (\wA, i) \in \lang{\rhoInv{\regexToAutomaton{\cE_i}}\odot \regexToAutomaton{(\wA, i)}}$, as required.
\end{proof}

\begin{defi}[\NFA $\hat\cC_\I^i$]
Let $\I$ be of width $2^n$ and let $i \in \set{4,5,6}$. Take the \NFA $\cC_n = \tuple{Q^{(1)},\Sigma_\I \times N_n,\delta^{(1)},Q_\ini^{(1)},Q_\fin^{(1)}}$ as above with $\BigO{n}$ states
and $\regexToAutomaton{\cF_\I^i} = \tuple{Q^{(2)},\Sigma_\I \times \Phi,\delta^{(2)},Q_\ini^{(2)},Q_\fin^{(2)}}$ with $\BigO{\size{\cT}}$ states.
We define $\hat\cC_\I^i \coloneqq \tuple{Q,\Sigma_\I \times N_n,\delta,Q_\ini,Q_\fin}$ with $\BigO{n\cdot \size{\cT}}$ states, where
\begin{align*}
    Q &\coloneqq Q^{(1)} \times Q^{(2)}, &
    Q_{\ini} &\coloneqq Q_\ini^{(1)} \times Q_\ini^{(2)}, &
    Q_{\fin} &\coloneqq Q_\fin^{(1)} \times Q_\fin^{(2)},
\end{align*}
and the transition relation is
\begin{align*}
    \delta \coloneqq {} &\set[\Big]{ (p, q) \xrightarrow{(a, \alpha)} (r, s) \suchthat q \xrightarrow{(a,\top)} s \in \delta^{(2)} \land p \xrightarrow{(a, \alpha)} r \in \delta^{(1)} } \cup {} \\
    &\set[\Big]{ (p, q) \xrightarrow{(a, \alpha)} (p, s) \suchthat q \xrightarrow{(a,\bot)} s \in \delta^{(2)} \land p \in Q^{(1)} }\,.
\end{align*}
\end{defi}
Intuitively, $\hat\cC_\I^i$ runs $\cC_n$ over the fragments of the input which are underlined by $\cF_\I^i$.

\begin{fact} \label{fact:complicated-language-membership}
    We have $w \in \lang[][][\big]{\hat\cC_\I^i}$ if, and only if, there is some ${v \in \lang{\rhoInv{\cF^i_\I}}}$ such that $\inProj{v} = w$ and $\outProj{v} \in \lang{\cC_n}$.
\end{fact}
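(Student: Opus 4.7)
My plan is to unfold the construction of $\hat\cC_\I^i$ as a ``pause-and-play'' product of the two component automata: the filter automaton $\regexToAutomaton{\cF_\I^i}$ reads every input letter $(a,\alpha)$ tagged by some $\beta \in \F$, while $\cC_n$ advances on $(a,\alpha)$ exactly when $\beta = \top$ and stays put otherwise. Under this reading, an accepting run of $\hat\cC_\I^i$ is nothing but a simultaneous pair consisting of an accepting run of $\regexToAutomaton{\cF_\I^i}$ on the $\beta$-tagged version of $w$ and an accepting run of $\cC_n$ on the $\top$-sub-word. With this intuition, the two directions of the equivalence become symmetric.

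For the $(\Rightarrow)$ direction, I take an accepting run $(p_0,q_0) \xrightarrow{(a_1,\alpha_1)} \cdots \xrightarrow{(a_m,\alpha_m)} (p_m,q_m)$ of $\hat\cC_\I^i$ on $w$. By inspection of $\delta$, each step fits exactly one of its two clauses, so I may witness it by a tag $\beta_k \in \set{\top,\bot}$: the value $\beta_k = \top$ corresponds to a simultaneous move $q_{k-1} \xrightarrow{(a_k,\top)} q_k$ in $\regexToAutomaton{\cF_\I^i}$ together with $p_{k-1} \xrightarrow{(a_k,\alpha_k)} p_k$ in $\cC_n$, while $\beta_k = \bot$ corresponds to a pure filter move $q_{k-1} \xrightarrow{(a_k,\bot)} q_k$ with $p_{k-1}=p_k$. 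I then define $v \coloneqq ((a_1,\alpha_1),\beta_1) \cdots ((a_m,\alpha_m),\beta_m)$. By construction $\inProj{v}=w$; reading off the $q$-coordinates gives an accepting run of $\regexToAutomaton{\cF_\I^i}$ whose $\rho$-lift accepts $v$, which shows $v \in \lang{\rhoInv{\cF_\I^i}}$; and the $p$-coordinates restricted to the positions with $\beta_k=\top$ form an accepting run of $\cC_n$ on $\outProj{v}$.

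For the $(\Leftarrow)$ direction, I reverse the argument. Starting from $v$ with $\inProj{v}=w$ and $\outProj{v}\in\lang{\cC_n}$, I fix an accepting run of $\regexToAutomaton{\rhoInv{\cF_\I^i}}$ on $v$ and an accepting run of $\cC_n$ on $\outProj{v}$. Pairing the two runs state-by-state, applying the first clause of $\delta$ at the $\top$-tagged positions of $v$ and the second clause at the $\bot$-tagged ones, produces an accepting run of $\hat\cC_\I^i$ on $w$. The initial and final conditions transfer directly because $Q_\ini = Q_\ini^{(1)} \times Q_\ini^{(2)}$ and $Q_\fin = Q_\fin^{(1)} \times Q_\fin^{(2)}$.

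The only place that needs care is the alignment between the $\cC_n$-moves chosen at $\top$-positions and the word $\outProj{v}$ that these moves are supposed to read; this matches on the nose from the definition of $\outProj{}$ as ``keep the $\top$-tagged letters, strip the tag''. Beyond this bookkeeping, the argument is a routine inspection of the two clauses defining $\delta$, and no further ingredient is needed.
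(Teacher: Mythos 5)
The paper leaves this Fact without a written proof (it is stated as an immediate consequence of the product construction), and your argument is exactly the intended unpacking: the two clauses of $\delta$ correspond to the filter automaton's $\top$- and $\bot$-moves, the nondeterministic choice of clause at each step recovers the $\F$-tag of the witness $v$, and the three required properties of $v$ then fall out by reading off the two coordinates of the run. The argument is correct; the only wrinkles are cosmetic --- ``each step fits exactly one of the two clauses'' should read ``at least one'', since a step may satisfy both, and ``pairing the two runs state-by-state'' elides the length mismatch that your very next clause resolves by stuttering $\cC_n$ at $\bot$-positions.
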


To finish the construction, we need to prove the following property:
\begin{lem}
    For $i \in \set{4,5,6}$
    \begin{align*}
        \langForall[][\big]{\lang[][][\big]{\hat\cC_\I^i}} = \set{ w \suchthat \comb{n} \concat \wA \in \cF^i_\I(w) }\,.
    \end{align*}
\end{lem}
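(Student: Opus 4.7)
The plan is to reduce the lemma to an algebraic observation about $\cC_n$. For $w\in\Sigma_\I^*$, any candidate underlining $u\in\lang{\cF^i_\I}$ with $\inProj{u}=w$ and any lift $w'\in\rhoInv{w}$ can be combined into a canonical witness $v(u,w')$ by pairing each letter of $w'$ with the $\F$-flag of $u$ at the same position. By \cref{fact:complicated-language-membership}, every witness of $w'\in\lang{\hat\cC^i_\I}$ arises in this way, and its output $\outProj{v(u,w')}$ has $\rho$-projection $\outProj{u}$ while its $N_n$-components are those of $w'$ restricted to the positions underlined by $u$. The lemma therefore boils down to: the property that every lift $w'$ of $w$ admits some compatible $u$ with $\outProj{v(u,w')}\in\lang{\cC_n}$ is equivalent to the existence of a single $u$ with $\outProj{u}=\comb{n}\concat\wA$.

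The next step is to determine when a fixed $u$ covers all lifts $w'$ of $w$. Inspection shows that each of $\cF^4_\I,\cF^5_\I,\cF^6_\I$ underlines exactly one occurrence of $\wA$; together with the shape $\lang{\cC_n}=\bigcup_{j\in N_n}\rhoInv{\lang{\cE_j}}\cdot\{(\wA,j)\}$, this implies that membership $\outProj{v(u,w')}\in\lang{\cC_n}$ depends only on the $N_n$-value that $w'$ assigns to this sole underlined $\wA$: it holds precisely when $\outProj{u}$ ends in $\wA$ and the prefix of $\outProj{u}$ lies in $\lang{\cE_j}$, where $j$ is that value. As $w'$ ranges over $\rhoInv{w}$, this value ranges over all of $N_n$, so $u$ covers every lift if and only if the prefix of $\outProj{u}$ lies in $\bigcap_{j\in N_n}\lang{\cE_j}=\{\comb{n}\}$, that is, $\outProj{u}=\comb{n}\concat\wA$. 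The inclusion $(\supseteq)$ then drops out immediately, since a single $u$ witnessing $\comb{n}\concat\wA\in\cF^i_\I(w)$ covers every $w'$.

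The main obstacle lies in the inclusion $(\subseteq)$, where the hypothesis only gives ``for every $w'$, some $u_{w'}$'' and we need ``some $u$ for every $w'$''. Assume for contradiction that no candidate satisfies $\outProj{u}=\comb{n}\concat\wA$; then, by the analysis above, every candidate $u$ fails on some $N_n$-value placed at $u$'s underlined $\wA$-position of $w$. A refuting lift $w'$ can be assembled from these bad values \emph{independently} provided distinct candidates $u$ underline distinct $\wA$-positions of $w$. For $\cF^4_\I$ and $\cF^5_\I$ this is trivial, as the underlining is uniquely determined by $w$; for $\cF^6_\I$ the only source of ambiguity is the choice of the ``distinguished'' cell in the second row, and each such choice places the underlined $\wA$ in a different cell of $w$. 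Verifying this combinatorial observation for $\cF^6_\I$ -- namely, that distinct parsings of $w$ produce $\wA$-positions that can be set independently -- is the crux of the argument; once it is in place, the above construction yields the desired lift $w'$ contradicting the hypothesis.
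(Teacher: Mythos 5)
Your proposal is correct and follows essentially the same strategy as the paper's proof. The paper proves the $i=6$ case in detail by the same three steps you identify: (1)~any witness $v$ for $w'\in\lang[][][\big]{\hat\cC_\I^i}$ factors as the pairing of a parse $u\in\lang{\cF_\I^i}$ with a lift $w'$; (2)~$\cC_n$-membership of $\outProj{v}$ depends only on $\outProj{u}$ and on the $N_n$-value at the final (underlined $\wA$) position, since $\lang{\cC_n}=\bigcup_j \rhoInv{\lang{\cE_j}}\cdot\{(\wA,j)\}$; and (3)~in the ``$\subseteq$'' direction, if no parse yields $\outProj{u}=\comb{n}\,\wA$, one assembles a single refuting lift $w'$ by choosing a bad $N_n$-value at each parse's underlined $\wA$-position, which is well-defined exactly because these positions are pairwise distinct across parses. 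Your reformulation is slightly cleaner in one respect: you extract up front the unifying observation that all three filters underline exactly one $\wA$, which lets you state the ``$u$ covers every lift iff $\outProj{u}=\comb{n}\,\wA$'' equivalence once and derive both inclusions from it, whereas the paper proves the inclusions separately (with a two-case split inside ``$\subseteq$'') and treats $i\in\{4,5\}$ as a near-verbatim repetition. The combinatorial fact you flag as the crux --- that distinct parses of $w$ under $\cF_\I^6$ place their underlined $\wA$ in distinct positions --- is exactly what the paper also invokes (the positions $\alpha_{u_1},\ldots,\alpha_{u_k}$ are the $\wA$'s following $s_1,\ldots,s_k$ in the designated second row), so nothing is missing.
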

As the proofs for $i \in \set{4,5,6}$ are analogous, we focus on the hardest one,
and then only comment how it can be adapted for $i \in \set{4,5}$.
\begin{proof}[Proof \textup{($i = 6$)}]
    \proofCase{Part A: inclusion ``$\subseteq$''}
    Let $w \in \langForall[][\big]{\lang[][][\big]{\hat\cC_\I^6}}$.
    Define
    \begin{align*}
        U &\coloneqq \set{ u \in \lang{\cF^6_\I} \suchthat \inProj{u} = w} \subseteq \bracket{\Sigma_\I \times \Phi}^*\,.
    \end{align*}
    Note that if $U = \emptyset$, then $\cF^6_\I(w) = \emptyset$. Hence, by \cref{fact:complicated-language-membership}, $\lang[][][\big]{\hat\cC_\I^6} = \emptyset$, and $\langForall[][\big]{\lang[][][\big]{\hat\cC_\I^6}} = \emptyset$, a contradiction.
    Therefore, $U \neq \emptyset$, and $w \in \lang{\inProj{\cF^6_\I}}$, so it is of the form
    \begin{align*}
        \concat \wCB \concat \wPref{p} \concat t \concat \wSuf{s} \concat \wA \concat \wCE \concat \beta \concat \wRE \concat \wRB
        \concat \wCB \concat \wPref{p_1} \concat t_1 \concat \wSuf{s_1} \concat \wA \concat \wCE
        \concat \wCB \concat \wPref{p_2} \concat t_2 \concat \wSuf{s_2} \concat \wA \concat \wCE
        \concat \cdots
        \concat \wCB \concat \wPref{p_k} \concat t_k \concat \wSuf{s_k} \concat \wA \concat \wCE \concat \wRE \concat \gamma
    \end{align*}
    for some $k \in \N$, $p, p_i, s, s_i \in N_n^*$, $t, t_i \in \cT$, $\beta \in \bracket{\Sigma_\I \setminus \set{\wRB, \wRE}}^*$ and $\gamma \in \Sigma_\I^*$.
    Furthermore, $\size{U} = k$ and $U$ contains the following underlined words $u^{(6)}_1, \ldots, u^{(6)}_k \in \bracket{\Sigma_\I \times \Phi}^*$:
    \begin{align*}
        u^{(6)}_1 &= \wCB \concat \wOutD{\wPref{p}} \concat t \concat \wSuf{s} \concat \wA \concat \wCE \concat \beta \concat \wRE \concat \wRB
        \concat \wCB \concat \wPref{p_1} \concat t_1 \concat \wOutD{\wSuf{s_1} \concat \wA} \concat \wCE
        \concat \wCB \concat \wPref{p_2} \concat t_2 \concat \wSuf{s_2} \concat \wA \concat \wCE
        \concat \cdots
        \concat \wCB \concat \wPref{p_k} \concat t_k \concat \wSuf{s_k} \concat \wA \concat \wCE \concat \wRE \concat \gamma \\
        u^{(6)}_2 &= \wCB \concat \wOutD{\wPref{p}} \concat t \concat \wSuf{s} \concat \wA \concat \wCE \concat \beta \concat \wRE \concat \wRB
        \concat \wCB \concat \wPref{p_1} \concat t_1 \concat \wSuf{s_1} \concat \wA \concat \wCE
        \concat \wCB \concat \wPref{p_2} \concat t_2 \concat \wOutD{\wSuf{s_2} \concat \wA} \concat \wCE
        \concat \cdots
        \concat \wCB \concat \wPref{p_k} \concat t_k \concat \wSuf{s_k} \concat \wA \concat \wCE \concat \wRE \concat \gamma \\
        &\vdots \\
        u^{(6)}_k &= \wCB \concat \wOutD{\wPref{p}} \concat t \concat \wSuf{s} \concat \wA \concat \wCE \concat \beta \concat \wRE \concat \wRB
        \concat \wCB \concat \wPref{p_1} \concat t_1 \concat \wSuf{s_1} \concat \wA \concat \wCE
        \concat \wCB \concat \wPref{p_2} \concat t_2 \concat \wSuf{s_2} \concat \wA \concat \wCE
        \concat \cdots
        \concat \wCB \concat \wPref{p_k} \concat t_k \concat \wOutD{\wSuf{s_k} \concat \wA} \concat \wCE \concat \wRE \concat \gamma\,.
    \end{align*}
    We distinguish two cases based on whether some $u\in U$ satisfies $\outProj{\rhoInv{\hiddenset{u}}} \subseteq \lang{\cC_n}$.
    
    \proofCase{Case A.1: such a $u$ exists}
    Take any such $u \in U$.
    Observe that $\outProj{u} \in \langForall{\lang{\cC_n}} = \set{ \comb{n} \wA }$.
    Hence, $\inProj{u} = w$, $\outProj{u} = \comb{n} \wA$, and $u \in \lang{\cF_\I^6}$.
    Therefore, $\comb{n} \wA \in \cF_\I^6(w)$, as required.

    \proofCase{Case A.2: such a $u$ does not exist}
    Therefore, for every $u \in U$, there exists $v_u \in \rhoInv{\hiddenset{u}}$ such that $\outProj{v_u} \notin \lang{\cC_n}$.
    Fix any family $(v_u)_{u \in U}$ of such words.
    Let $\alpha_u$ be the position of the last underlined symbol in $u$.
    We now define a word $w' \in \rhoInv{\hiddenset{w}}$ by choosing ``problematic'' output projections  at the positions $\alpha_u$. Formally, for every position $i$ in $w$ let
    \begin{align*}
        w'[i] \coloneqq \begin{cases}
                    \outProj{v_u[i]} & \text{if $i = \alpha_u$ for some $u$} \\
        (w[i], 0) & \text{otherwise}\,.
        \end{cases}
    \end{align*}
    Observe that $w'$ is properly defined, as positions $\alpha_u$ are pairwise different (corresponding to the letters $\wA$ following blocks $s_1$ to $s_k$).
    Since $\rhoProj{w'} = w$ and $w \in \langForall[][\big]{\lang[][][\big]{\hat\cC_\I^6}}$, we conclude that $w' \in \lang[][][\big]{\hat\cC_\I^6}$.
    By \cref{fact:complicated-language-membership}, we obtain $v \in \lang{\rhoInv{\cF^6_\I}}$ such that $\inProj{v} = w'$ and $\outProj{v} \in \lang{\cC_n}$.
    However, $\rhoProj{\outProj{v}} = \rhoProj{\outProj{v_u}}$ for some $u \in U$ and the last symbols of $\outProj{v}$ and $\outProj{v_u}$ are identical.
    Since by construction $\cC_n$ ignores the component $N_n$ of its alphabet $\Sigma_I \times N_n$ for all letters but the last one, we get that
    \begin{align*}
        \outProj{v} \in \lang{\cC_n} \quad \text{if and only if} \quad \outProj{v_u} \in \lang{\cC_n}\,.
    \end{align*}
    This contradicts the fact that $\outProj{v} \in \lang{\cC_n}$ and completes the analysis of Part A.

    \proofCase{Part B: inclusion ``$\supseteq$''}
    Consider $w$ such that $\comb{n}\wA \in \cF_\I^6(w)$.
    From the definition of $\cF_\I^6(w)$, we can fix $v \in \lang{\cF_\I^6}$ such that $\inProj{v} = w$ and $\outProj{v} = \comb{n}\wA$.
    We need to show that $\rhoInv{\hiddenset{w}} \subseteq \lang[][][\big]{\hat\cC_\I^6}$.
    Take any $w' \in \rhoInv{\hiddenset{w}}$. By 
    \cref{fact:complicated-language-membership}, this
    requires 
    constructing $u \in \lang{\rhoInv{\cF_\I^6}}$ such that $\inProj{u} = w'$ and $\outProj{u} \in \lang{\cC_n}$. We do this by combining $w'$ and $v$.
    Let $u \in \bracket{\Sigma_\I \times \N_n \times \Phi}^*$ be the unique word such that $\inProj{u} = w'$ and for every position $i$ in $u$ we have $u[i] = (\_, \_, \top)$ if and only if $v[i] = (\_, \top)$.
    We have to show that $\outProj{u} \in \lang{\cC_n}$. Note that \[\outProj{u} \in \rhoInv{\comb{n}\wA} = \rhoInv{\Big(\bigcap\nolimits_{0 \le i \le n} \lang{\cE_i}\Big)\wA} = \rhoInv{\langForall{\lang{\cC_n}}}\,,
    \]
    where the first inclusion follows from the definition of $u$, while the
    subsequent equalities follow from \cref{lem:comb-singleton-is-the-intersection-of-Es,lem:universal-projection-of-Cn-is-the-comb}, respectively. We get that $\rho_\I(\outProj{u}) \in \langForall{\lang{\cC_n}}$, and thus $\outProj{u} \in \rhoInv{\rho_\I(\outProj{u})} \subseteq \lang{\cC_n}$.
\end{proof}

\begin{proof}[Proof \textup{($i \in \set{4,5}$)}]
    The proof is analogous to the case $i = 6$.
    As the cases are distinguished by the filter $\cF_\I^i$ being used, the only differences are related to the shape of words matched by $\inProj{\cF_\I^i}$.
    In particular, the set $U$ for $i \in \set{4,5}$ is now a singleton containing $u^{(i)}$:
    \begin{align*}
        u^{(4)} &= \wCB \concat \wOutD{\wPref{p}} \concat t \concat \wOutD{\wSuf{s} \concat \wA} \concat \wCE \concat \gamma \tag{$i = 4$} \\
        u^{(5)} &= \concat
        \wCB \concat \wOutD{\wPref{p_1}} \concat t \concat \wSuf{\wOutD{s'_1} \concat s_1} \concat \wA \concat \wCE \concat
        \wCB \concat {\wPref{p_2}} \concat t \concat \wSuf{\wOutD{s'_2} \concat s_2} \concat \wA \concat \wCE \cdots
        \wCB \concat {\wPref{p_{k-1}}} \concat t \concat \wSuf{\wOutD{s'_{k-1}} \concat s_{k-1}} \concat \wA \concat \wCE \concat
        \wCB \concat {\wPref{p_k}} \concat t \concat \wOutD{\wSuf{s'_k} \concat \wA} \concat \wCE \concat \wRE \concat \gamma \tag{$i = 5$}
    \end{align*}
    The rest of the proof only requires substituting $\cF_\I^6$ with $\cF_\I^4$ or $\cF_\I^5$.
\end{proof}

    \section{NFA of doubly exponential size after universal projection}\label{sec:size-of-automaton}
    \newcommand{\tilingPictureCounterFixed}{
    \begin{tikzpicture}
        \matrix[left delimiter={[},right delimiter={]},row sep=2pt,column sep=2pt,inner xsep=0,ampersand replacement=\&]
        {
            \tileInner{2}{2}{5}{5} \& \tileInner{2}{2}{0}{2} \& \tileInner{2}{2}{0}{2} \& \tileInner{2}{2}{0}{2} \& \tileInner{2}{3}{3}{2} \\
            \tileInner{5}{0}{5}{5} \& \tileInner{0}{0}{0}{0} \& \tileInner{0}{0}{0}{0} \& \tileInner{0}{1}{1}{0} \& \tileInner{3}{3}{3}{1} \\
            \tileInner{5}{0}{5}{5} \& \tileInner{0}{0}{0}{0} \& \tileInner{0}{1}{1}{0} \& \tileInner{1}{1}{0}{1} \& \tileInner{3}{3}{3}{1} \\
            \tileInner{5}{0}{5}{5} \& \tileInner{0}{0}{0}{0} \& \tileInner{1}{0}{1}{0} \& \tileInner{0}{1}{1}{0} \& \tileInner{3}{3}{3}{1} \\
            \tileInner{5}{0}{5}{5} \& \tileInner{0}{1}{1}{0} \& \tileInner{1}{1}{0}{1} \& \tileInner{1}{1}{0}{1} \& \tileInner{3}{3}{3}{1} \\
            \tileInner{5}{0}{5}{5} \& \tileInner{1}{0}{1}{0} \& \tileInner{0}{0}{0}{0} \& \tileInner{0}{1}{1}{0} \& \tileInner{3}{3}{3}{1} \\
            \tileInner{5}{0}{5}{5} \& \tileInner{1}{0}{1}{0} \& \tileInner{0}{1}{1}{0} \& \tileInner{1}{1}{0}{1} \& \tileInner{3}{3}{3}{1} \\
            \tileInner{5}{0}{5}{5} \& \tileInner{1}{0}{1}{0} \& \tileInner{1}{0}{1}{0} \& \tileInner{0}{1}{1}{0} \& \tileInner{3}{3}{3}{1} \\
            \tileInner{5}{4}{4}{5} \& \tileInner{1}{4}{4}{4} \& \tileInner{1}{4}{4}{4} \& \tileInner{1}{4}{4}{4} \& \tileInner{3}{3}{4}{4} \\
        };
    \end{tikzpicture}
}

From the lower bounds established in
    \Cref{subsec:construction-of-the-automaton}, it is now easy to
    construct a family $(\cA_{n})_{n \in \N}$ of \NFA,
    each of size $\BigO{n^4}$, such that
    the smallest \NFA after a universal projection step has
    has a doubly exponential number of states. 
    This is achieved by simulating a binary counter over an exponential number of bits in a proper tiling.

    To this end, for a fixed $n$, let $\I_n$ be the problem instance from the
    left-hand side of \cref{fig:problem-instance-and-solution}, and define $\cA_n \coloneqq \cA_{\I_n}$. Intuitively, the colours $0, 1$ vertically represent the counter bits, and horizontally encode the carry-over bit.
    The only valid $(\cT_\increment, \toplefttile, \bottomrighttile)$-tiling of width $k>2$ simulates incrementing an
    $(k-2)$-bit binary counter from $0$ to $2^{k-2}-1$; see
    the right-hand side of 
    \Cref{fig:problem-instance-and-solution} for an
    example for width $k=5$.
    Thus, after a universal projection step, since $\cA_n$ enforces width
    $k=2^n$, the     resulting \NFA accepts a single word of length doubly exponential in $n$.

\begin{prop}
    \label{prop:the-size-of-the-NFA}
    The \NFA for $\univProj[1][\big]{\lang[][][\big]{\cA_{(\cT_\increment, \toplefttile, \bottomrighttile, n)}}}$ has size $\Omega\left(2^{2^n}\right)$.
\end{prop}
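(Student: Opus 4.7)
The plan is to reduce the bound to the following chain of observations: (i) the tile set $\T_\increment$ with the prescribed corners admits a unique valid tiling $T^\star$ of width $2^n$, namely the trace of incrementing a $(2^n{-}2)$-bit binary counter from $0$ up to $2^{2^n-2}-1$; (ii) its encoding $\enc{T^\star}$ has length $\Omega(2^{2^n})$; (iii) every NFA recognising a singleton language $\{w\}$ contains at least $|w|+1$ states. By \cref{lem:projection-of-relation-equals-valid-encodings,lem:correct-encodings-design-choice-b}, observation (i) forces $\univProj[1]{\lang{\cA_{(\T_\increment, \toplefttile, \bottomrighttile, n)}}} = \valEncLang = \set{\enc{T^\star}}$, so that (ii) and (iii) together yield the desired $\Omega(2^{2^n})$ lower bound on the NFA's size.

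The bulk of the work is proving (i) by induction on the rows. The top row is fixed because $\toplefttile$ is given, $(2,2,0,2)$ is the unique interior top-border tile in $\T_\increment$, and $(2,3,3,2)$ is the unique top-right corner. For every interior row, the top colours are prescribed by the bottom colours of the previous row; the left-border tile $(5,0,5,5)$ imposes carry-out $0$ at the leftmost interior position, while the right-border tile $(3,3,3,1)$ injects carry-in $1$ at the rightmost interior position. The four interior tiles $(0,0,0,0), (1,0,1,0), (0,1,1,0), (1,1,0,1)$ exhaust the four $(\text{bit},\text{carry-in}) \mapsto (\text{bit}',\text{carry-out})$ cases of adding $1$ to a bit, so propagating the carry from right to left uniquely determines the next row, producing the encoding of the incremented counter value. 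This continues until the counter reaches $2^{2^n-2}-1$: the resulting overflow cannot be absorbed while matching the left border's right colour $0$, so the final row must use the bottom-border tiles and terminate in $\bottomrighttile$. Hence $T^\star$ exists, is unique, and has height $2^{2^n-2}+1$.

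For (ii), each cell encoding has length $2^n + 5$, each row encoding has length $\Theta(2^{2n})$, and therefore $|\enc{T^\star}| = \Theta((2^{2^n-2}+1) \cdot 2^{2n}) = \Omega(2^{2^n})$. For (iii), take any accepting run $q_0 q_1 \cdots q_{|w|}$ of an NFA $\cB$ with $\lang{\cB} = \{w\}$: a repetition $q_i = q_j$ with $i < j$ would shortcut to an accepting run on the strictly shorter word $w[1,i] \cdot w[j+1,|w|]$, which would then also belong to $\lang{\cB}$, contradicting $\lang{\cB} = \{w\}$. Hence the $|w|+1$ states $q_0, \ldots, q_{|w|}$ are pairwise distinct, and $\cB$ has at least $|w|+1$ states. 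The main (minor) obstacle is the careful case analysis behind the uniqueness argument in step (i); everything else is a direct calculation or folklore.
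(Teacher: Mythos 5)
Your proposal is correct and follows exactly the approach the paper sketches informally (the paper gives only a short intuitive explanation in the surrounding text and no formal proof): uniqueness of the incrementing-counter tiling forces the universal projection to be a singleton $\set{\enc{T^\star}}$ with $\size{\enc{T^\star}} = \Omega(2^{2^n})$, and any \NFA for a singleton language $\set{w}$ needs $\size{w}+1$ states because a repeated state in the accepting run would permit a strictly shorter accepted word. Your step (i) carefully fills in the carry-propagation case analysis that the paper leaves implicit, and the length computation in (ii) is right ($2^n$ cells of length $2^n+5$ per row, $\Theta(2^{2^n-2})$ rows).
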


\begin{figure}[b]
\centering \hfill
\begin{minipage}[m]{0.5\textwidth}
    \centering
    \begin{align*}
        \cT_\increment &\coloneqq
            \begin{tikzpicture}[baseline=-0.53ex]
                \matrix[left delimiter={\{},right delimiter={\}},row sep=2pt,column sep=2pt,inner xsep=0,ampersand replacement=\&]
                {
                    \node[inner sep=0]{\tikz[baseline=-0.53ex]{\tileInner{2}{2}{5}{5}},}; \& \node[inner sep=0]{\tikz[baseline=-0.53ex]{\tileInner{2}{2}{0}{2}},}; \& \node[inner sep=0]{\tikz[baseline=-0.53ex]{\tileInner{2}{3}{3}{2}},}; \\
                    \node[inner sep=0]{\tikz[baseline=-0.53ex]{\tileInner{5}{0}{5}{5}},};, \& \ \& \node[inner sep=0]{\tikz[baseline=-0.53ex]{\tileInner{3}{3}{3}{1}},}; \\
                    \node[inner sep=0]{\tikz[baseline=-0.53ex]{\tileInner{5}{4}{4}{5}},}; \& \node[inner sep=0]{\tikz[baseline=-0.53ex]{\tileInner{1}{4}{4}{4}},}; \& \node[inner sep=0]{\tikz[baseline=-0.53ex]{\tileInner{3}{3}{4}{4}}\phantom{,}}; \\
                };
            \end{tikzpicture} \cup
        \begin{tikzpicture}[baseline=-0.53ex]
            \matrix[left delimiter={\{},right delimiter={\}},row sep=2pt,column sep=2pt,inner xsep=0,ampersand replacement=\&]
            {
                \node[inner sep=0]{\tikz[baseline=-0.53ex]{\tileInner{0}{0}{0}{0}},}; \& \node[inner sep=0]{\tikz[baseline=-0.53ex]{\tileInner{1}{0}{1}{0}},}; \\
                \node[inner sep=0]{\tikz[baseline=-0.53ex]{\tileInner{0}{1}{1}{0}},};, \& \node[inner sep=0]{\tikz[baseline=-0.53ex]{\tileInner{1}{1}{0}{1}}\phantom{,}}; \\
            };
        \end{tikzpicture} \\
        \toplefttile &\coloneqq \tikz[baseline=-0.53ex]{\tileInner{2}{2}{5}{5}} \\
        \bottomrighttile  & \coloneqq \tikz[baseline=-0.53ex]{\tileInner{3}{3}{4}{4}}
    \end{align*}
\end{minipage}%
\hspace{.5cm}
\begin{minipage}[m]{0.3\textwidth}
    \centering
    \tilingPictureCounterFixed
    \vspace{-1mm} 
\end{minipage}
\hspace{.45cm}
\hfill
\caption{The instance of a tiling problem $\I_n=(\cT_\increment, \toplefttile, \bottomrighttile, n)$ and the unique valid tiling it enforces.}
\label{fig:problem-instance-and-solution}
\end{figure}

    \section{Emptiness after universal projection is in \expspace}\label{sec:upper-bounds}
    
We now consider algorithmic upper bounds for deciding whether the
language of an automatic relation $R\subseteq (\Sigma^*)^{d+k}$,
given by an \NFA $\cA_R$, is non-empty after a
universal projection step. This yields the upper bound of \Cref{thm:main-theorem}.

Define a homomorphism $h\colon (\Sigma_\#^{d+k})^* \to
  (\Sigma_\#^d)^*$ by
\[
h(a_1,\ldots,a_d,a_{d+1},\ldots,a_{d+k}) \coloneqq
(a_1,\ldots,a_d).
\]
For a given \NFA $\cB$ over $\Sigma_\#^{d+k}$ such that $S\subseteq
(\Sigma^*)^{d+k}$ is automatic via $\cB$ (meaning that $S=\langToRel{\lang{\cB}}$), it is clear that we can
compute in linear time an \NFA $\cB'$ with the same number of states
as $\cB$ such that $L(\cB')=h(\lang{\cB})$. The homomorphism $h$ acts
almost like existential projection, but in general, we do not have
that $\proj[d]{S}$ is automatic via $\cB'$. For instance, suppose that
\[
w = { \begin{bmatrix}
  a \\
  a
\end{bmatrix}\begin{bmatrix}
  b \\
  a
\end{bmatrix}\begin{bmatrix}
  \wSharp\\
  c
\end{bmatrix}\begin{bmatrix}
  \wSharp\\
  a
\end{bmatrix}} \in \lang{\cB}\,.
\]
Then $h(w)=aa\wSharp\wSharp \not\in L_\corrrr$ because of the trailing $\wSharp$ symbols. To remove them, we define
\[
\mathObject{Strip}{L} \coloneqq \set{ w \suchthat \text{there exists
    $v \in (\set[\big]{\wSharp}^d)^*$ such that $wv \in L$} }\,.
\]
It is then the case that $\proj[d]{S}$ is automatic via
$\mathObject{Strip}{\lang{\cB'}} \cap L_\corrrr$. Note that an \NFA for
$\mathObject{Strip}{L}$ can be computed in linear time from an \NFA
for $L$ without changing the set of states by making all states
accepting that can reach a final state via a sequence of
``$\{\wSharp\}^d$'' symbols.

Recall that $\univProj[d]{R} =
\setComplement{\proj[d]{\setComplement{R}}}$, an
automatic presentation of $\univProj[d]{R}$ is thus given by
\[
   \setComplement{
           \bracket{\mathObject{Strip}{
           h \bracket{
               \setComplement{\lang{\cA_R}}
           }
       }
       \cap L_\corrrr}
   }
   \cap L_\corrrr.
\]
Assuming $Q_R$ is the set of states of $\cA_R$, and recalling that
$L_\corrrr \subseteq (\Sigma_\wSharp^d)^*$ is given by an \NFA with
$2^{d+2}$ states, it can easily be checked that the number of
states of an \NFA whose language gives the universal projection of $R$
is bounded by $2^{\big((2^{\size{Q_R} + d + 2}) + d + 2\big)}$.

With those characterisations and estimations at hand, the \expspace
upper bound stated in \Cref{thm:main-theorem} can now easily be
established. 
\begin{prop}
  Let $R$ be an automatic relation given by an \NFA $\cA_R$.
  Deciding whether $\univProj[d]{R}\neq \emptyset$ is in \expspace.
\end{prop}
\begin{proof}
    For an \expspace algorithm, we first construct an NFA
    $\cB=(Q,\Sigma^d_\wSharp,\delta,\set{q_0},F)$ whose language is
    $\big(\mathObject{Strip}[][][\big]{ h(\setComplement{\lang{\cA_R}})}\cap
    L_\corrrr\big)$. We have $\size Q \le 2^{\size{Q_R} + d + 2}$, where $Q_R$
    is the set of states of $\cA_R$, and hence $\cB$ can be constructed
    in exponential space. It remains to show that non-emptiness of
    $\setComplement{\lang{\cB}} \cap L_\corrrr$ can be decided in
    polynomial space.

    Note that we cannot explicitly construct an \NFA for this
    language within polynomial space. Let $\cA_\corrrr=(S,\Sigma_\wSharp^d,\delta_\corrrr,\set{s_0},F_\corrrr)$
    be the \NFA for $L_\corrrr$, we can however non-deterministically
    guess a word in $\setComplement{\lang{\cB}} \cap \lang{\cA_\corrrr}$
    letter by letter as follows. We keep track of a configuration of the
    form $(Q',s) \in 2^Q \times S$, which initially is $(\{ q_0\},
    s_0)$. Then we repeatedly non-deterministically guess some
    $a\in \Sigma^d_\wSharp$ and update $(Q',s)$ to
    $(\set{ \delta(q',a) \suchthat q'\in Q'},\delta_\corrrr(s,a))$ until we reach a configuration
    $(Q',s)$ such that $Q'\cap F=\emptyset$ and $s\in F_\corrrr$. Clearly,
    the word obtained by this sequence of letters is in
    $\setComplement{\lang{\cB}}$ and $\lang{L_\corrrr}$. The overall
    membership in \expspace is then a consequence of Savitch's theorem
    and the observation that the length of the shortest word in
    $\setComplement{\lang{\cB}} \cap L_\corrrr$ is bounded by
    $2^{\big((2^{\size{Q_R} + d + 2}) + d + 2\big)}$.
\end{proof}

    \section{Lower bounds for fragments of \Buchi arithmetic}\label{sec:buchi-lower}
    
In this section, we apply some of the ideas underlying the \expspace lower bound
for universal projection to develop some new lower bounds for
fragments of \Buchi arithmetic with a fixed quantifier alternation
prefix. We explicitly state the lower bounds for \Buchi arithmetic
in base 2, they can, however, easily be generalised to any base $p \ge 2$.
All reductions are easily seen to be \logspace-computable and, again, we do not explicitly argue for that.

    \subsection[A lower bound for the E*A*E* fragment]{A lower bound for the \texorpdfstring{\boldmath$\existsSymbol^*\forallSymbol^*\existsSymbol^*$}{E*A*E*} fragment}
    \label{sec:first-buchi-lower-bound}
    We reduce from the $\tilingProblem$ (\cref{problem:tiling-problem}).
Fix a problem instance $\I = \tuple{\cT, \toplefttile, \bottomrighttile, n}$ and let $W_n\coloneqq2^n$.
The instance $\I$ asks for the existence of a valid tiling of width $W_n$, with $\toplefttile$ in the upper left corner and $\bottomrighttile$ in the bottom right one. W.l.o.g.\ we assume that $\size{\cT} = 2^m$ for some $m \in \N$.

Let $\pbin{k} \in \set{0,1}^+$ denote the binary representation of a number $k \in \N$ in most-significant-digit-first encoding such that $\pbin{k} \in 1 \cdot \set{0,1}^*$ for all $k>0$ and $\pbin{0}\coloneqq 0$.
In what follows, powers of 2 will be used to indicate positions in binary expansions of numbers.
We use the predicate $\pBit{v,x,b}$ defined below for determining whether the bit $b$ occurs in the expansion of some $v$ at a position determined by $x$

\begin{align*}
    \sem{\pBit{}} = {}
    &\set{ (v, x, b) \suchthat \begin{aligned}
                                   &\text{$x = 2^k$ and $v = \sem{\vpre \cdot b \cdot \vsuf}_2$ for some} \\
                                   &\text{$k \in \N$, $\vpre \in \set{0,1}^*$, and $\vsuf \in \set{0,1}^{k-1}$}
    \end{aligned}}\, .
\end{align*}
This predicate can be implemented by an $\existsSymbol^*$
formula of B\"uchi arithmetic as follows:
\begin{align}
    \pBit{v, x, b} \coloneqq {}
    &\exists{\vpre} \exists{y} \exists{\vsuf}. \\
    &\quad\pP{x} \land {} \label{eq:one-bit-lit} \\
    &\quad     (\pV{\vpre, y} \land y > x \lor \vpre = 0) \land {} \label{eq:prefix-has-ones-to-the-left}\\
    &\quad x > \vsuf \land {} \label{eq:sufix-has-ones-to-the-right}\\
    &\quad \bigl((b = 0 \land v = \vpre + \vsuf) \lor (b = 1 \land v = \vpre + x + \vsuf)\bigr)\,.\label{eq:pref-plus-b-plus-suf}
\end{align}
In words, we make sure that $\pbin{x}$ has just one bit set~\eqref{eq:one-bit-lit}.
Then, $\pbin{\vpre}$ is required to have all its digits $1$ to the left of $(\log_2(x)+1)$th least significant bit, or no digits $1$ at all~\eqref{eq:prefix-has-ones-to-the-left};
similarly, $\pbin{\vsuf}$ has to have all digits $1$ to the right~\eqref{eq:sufix-has-ones-to-the-right}.
With all that, we can assert that $\pbin{x} = \pbin{\vpre} \cdot b \cdot \pbin{\vsuf}$~\eqref{eq:pref-plus-b-plus-suf}.

For a given $k\in\N$, by using a ($k+1$)-tuple of numbers $\bar{u}=(u_0,\ldots,u_k)$, we go beyond the binary alphabet $\set{0,1}$ of $\pbin{\argumentDot}$ to the alphabet $\set{0,1,\dots, 2^{k+1}-1}$. The extra variable $u_k$ in $\bar u$ is for technical convenience in order to enable
us to use the formulas defined in this section in the next section as well.
We define an equivalent of $\pBit{}$:
\begin{align*}
    \pRead{\bar{u}, x, a} \coloneqq {}  &\exists{a_0}\exists{a_1} \ldots \exists{a_{k+1}}. \exists{b_1} \ldots \exists{b_{k+1}}.\\
    &{\quad}a_0 = 0 \land a_{k+1} = a \land {}\\
    &{\quad}\smash{\textstyle\bigwedge_{i = 1}^{k+1} \squareBracket{a_i = 2a_{i-1} + b_i \land \pBit{u_{i-1}, x, b_i}}}\,.
\end{align*}
Intuitively, bits at the position determined by $x$ in $\bar{u}$ are interpreted as the binary representation of a number. Note that $\pRead{}$ is also an $\existsSymbol^*$ formula.

Fix some enumeration $\psi\colon \set{0,1,\ldots,2^{m}-1} \to \cT$ of tiles.
We can trivially define quantifier-free formulas $\pTopLeft{}$, $\pBottomRight{}$, $\pMatchH{}$, and $\pMatchV{}$ such that
\begin{align*}
    \sem{\pTopLeft{}} &= \set[\big]{t \in \N \suchthat \psi(t) = \toplefttile}\\
    \sem{\pBottomRight{}} &= \set[\big]{t \in \N \suchthat \psi(t) = \bottomrighttile}\\
    \sem{\pMatchH{}} &= \set{(t, t') \in \N^2 \suchthat \tRight{\psi(t)} = \tLeft{\psi(t')}}\\
    \sem{\pMatchV{}} &= \set{(t, t') \in \N^2 \suchthat \tBottom{\psi(t)} = \tTop{\psi(t')}}\,.
\end{align*}

Our goal is to represent a tiling of height $H$ as a tuple of numbers $\bar u$ such that 
the bottom-right tile is at position one, the top-left at position $W_n\cdot H$, and any tile
$t$ at position $(x,y)$ is at position $W_n\cdot H - (y-1)\cdot W_n - (x - 1)$.
To assert that tiles are properly vertically aligned, we need to compare tiles whose
distance is $W_n$. As stated above, in this section, $W_n=2^n$ and hence of polynomial
bit-length. In principle, we could now straight-forwardly continue with the reduction.
However, in the next section we will have $W_n=2^{2^n}$, which is of exponential bit length.
In order to have a uniform presentation, in this section we give a slightly more sophisticated
reduction that is here easy to follow and can then be enhanced to work for $W_n=2^{2^n}$ in 
the next section.

Our first step is to define a ``ruler''. Intuitively, $\pIsARuler{\bar{u}, F, W_n}$ holds
if for every position $p$ in the interval $[0,F)$, $\pRead_n{\bar u, 2^p, p \bmod W_n}$ holds:
\begin{align*}
    \pIsARuler{\bar{u}, F, W} \coloneqq {}& \pRead_n{\bar{u}, 1, 0} \land \pRead_n{\bar{u}, 2F, 0} \land {}\\
    &\quad\forall{x}. (F > x \geq 1 \land \pP{x}) \limplies {}\\
    &\quad\quad\exists{v'}\exists{v}.\pRead_n{\bar{u}, 2x, v'} \land \pRead_n{\bar{u}, x, v} \land {}\\
    &\quad\quad\quad(v' = v+1 \lor (v' = 0 \land v = W-1))\,.
\end{align*}
Assuming that $\bar u$ satisfies $\pIsARuler{\bar u, F, W}$, we can easily define the set of
positions which are 
multiples of $W_n$ and smaller than $F$:
\[
    \pWidthMul{\bar{u}, x} \coloneqq {} \pRead_n{\bar{u}, x, 0}\,.
\]
\vspace{-\baselineskip}
\begin{fact}
We have $x = 2^{kW_n}$ for some $k \in \N$ if and only if
for some $F$ and $\bar u$
\[
    \pIsARuler{\bar{u}, F, W_n} \land {2F > x \geq 1}  \land \pWidthMul{\bar{u}, x}\,.
\]
\end{fact}

We can also ensure that positions represented by $x$ and $x'$ have distance exactly $W_n$:
\begin{multline*}
    \pWidth{\bar{u}, x, x'} \coloneqq \exists{v}. \pRead{\bar{u}, x, v} \land \pRead{\bar{u}, x', v} \land {}\\
   \forall{y}. (x > y > x') \limplies \neg\pRead{\bar{u}, y, v}\,.
\end{multline*}
\vspace{-\baselineskip}
\begin{fact}
  We have $x = 2^{p+W_n}, y = 2^p$ for some $p$ if and only if for some $F$ and $\bar u$ 
  \[\pIsARuler{\bar{u}, F, W_n} \land {2F > x \geq 1} \land \pWidth{\bar{u}, x, y}\,.
  \]
\end{fact}

We now complete our reduction by defining a formula $\pETiling{}$ of B\"uchi arithmetic 
such that $\sem{\pETiling{}}$ is non-empty
if and only if there exists a valid $(\cT, \toplefttile, \bottomrighttile)$-tiling of width~$W_n$:
\begin{align*}
    \pETiling{} \coloneqq {}
    &\exists{F}.  \exists{u_0} \exists{u_1} \ldots\, \exists{u_{n}}. \exists{w_1} \exists{w_2} \ldots\, \exists{w_m}. \\
    &\quad \pIsARuler{\bar{u}, F, W_n} \land {} \tag{A}\label{eq:is-a-ruler}\\
    &\quad\big[\exists{t}. \pRead_m{\bar{w}, F, t} \land \pTopLeft{t}\big] \land {} \tag{B}\label{eq:top-left-matches}\\
    &\quad\big[\mathrlap{\exists{t}.}\hphantom{\exists{t}.} \pRead_m{\bar{w}, 1, t} \land \pBottomRight{t}\big] \land {} \tag{C}\label{eq:bottom-right-matches}\\
    &\quad \big[\forall{x}. \bracket{F > x \land \pP{x} \land \neg\pWidthMul{\bar{u}, 2x, W_n}} \limplies {}\\
    &\quad\quad\exists{t'}\exists{t}.\pRead_m{\bar{w}, 2x, t'} \land \pRead_m{\bar{w}, x, t} \land \pMatchH{t', t}\big] \land {}\tag{D}\label{eq:horizontal-matching}\\
    &\quad\big[\forall{x}\forall{x'}. \bracket{2F > x > x' \land \pWidth{\bar{u}, x, x', W_n}} \limplies {}\\
    &\quad\quad\exists{t} \exists{t'}.\pRead_m{\bar{w}, x, t} \land \pRead_m{\bar{w}, x', t'} \land \pMatchV{t, t'}\big]\,. \tag{E}\label{eq:vertical-matching}
\end{align*}
Above, in~\eqref{eq:is-a-ruler}, we require that $2F = 2^{k W_n}$ for some $k \in \N$.
We interpret $\bar{u}$ between positions $F$ and $1$ as a word over the alphabet $\set{0,1,\ldots,2^m-1}$. In \crefrange{eq:top-left-matches}{eq:vertical-matching}, we express four properties of a valid tiling: 
\begin{itemize}
    \item in \cref{eq:top-left-matches,eq:bottom-right-matches}: that tiles $\toplefttile$ and $\bottomrighttile$ are on their respective positions, $F$ corresponding to the upper-left corner and $1$ to the lower-right one,
    \item in \eqref{eq:horizontal-matching}: that tiles match horizontally; here, $2x$ and $x$ are two consecutive powers of two, corresponding to horizontally adjacent tiles, unless $\pWidthMul{\bar{u}, 2x, N}$ holds, when $2x$ and $x$ are in two consecutive rows,
    \item in \eqref{eq:vertical-matching}: that tiles match vertically; positions $x, x'$ correspond to vertical neighbours.
\end{itemize}
Keeping in mind that $\phi \limplies \psi \equiv \neg \phi \lor \psi$, we observe that $\pETiling{}$ has a quantifier prefix $\existsSymbol^* \forallSymbol^* \existsSymbol^*$ since 
$\pWidthMul{}$ has an $\existsSymbol^*$ quantifier prefix, $\pWidth{}$ has an $\existsSymbol^* \forallSymbol^*$ quantifier prefix, and $\pIsARuler{}$ has an $\forallSymbol^* \existsSymbol^*$ quantifier prefix.
Finally, observe that the above construction can easily be generalised to \Buchi arithmetic with a $\pV_p{}$ predicate for any $p \ge 2$. This yields:
\begin{prop}    
    Deciding the $\existsSymbol^* \forallSymbol^* \existsSymbol^*$ fragment of \Buchi arithmetic is \expspace-hard.
\end{prop}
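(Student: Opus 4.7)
The plan is to establish correctness of the reduction already described and then assemble the quantifier-alternation bookkeeping to yield an $\existsSymbol^*\forallSymbol^*\existsSymbol^*$ sentence of size polynomial in the size of the $\tilingProblem$ instance. Together with \expspace-hardness of $\tilingProblem$ (stated earlier) this yields the proposition. Concretely, I would first verify that each auxiliary predicate behaves as announced: $\pBit{v,x,b}$ holds iff $v$ has bit $b$ at position $\log_2 x$ (this is immediate from clauses \eqref{eq:one-bit-lit}--\eqref{eq:pref-plus-b-plus-suf}, noting that $\pV{\vpre,y}\land y>x$ forces every set bit of $\vpre$ to sit strictly above position $\log_2 x$); and $\pRead_k{\bar u,x,a}$ returns the integer whose base-$2$ digits are, from most to least significant, $\pBit{u_1,x,\cdot}\ldots\pBit{u_k,x,\cdot}$, by a straightforward induction on the accumulator $a_i=2a_{i-1}+b_i$.

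Next I would prove the key combinatorial lemma for the ruler. Call a tuple $\bar u$ a \emph{ruler spanning from $F$ to $1$ of period $W_n$} if for every power-of-two position $2^i\le 2F$ one has $\pRead_n{\bar u,2^i,i\bmod W_n}$. I would show by induction on $i$ that $\pIsARuler{\bar u,F,N}\land\pMaxNum_n{N}$ forces exactly this pattern: the base case uses the $\pRead_n{\bar u,1,0}$ conjunct, and the step uses that the successor of a position $x$ reads either $v+1$ or wraps from $N$ to $0$. With this in hand the semantics of $\pWidthMul{\bar u,x,N}$ and $\pWidth{\bar u,x,x',N}$ are immediate: $\pWidthMul{\bar u,x,N}$ holds precisely when $x=2^{kW_n}$ for some $k\ge 0$, and $\pWidth{\bar u,x,x',N}$ holds precisely when $x=2^i$, $x'=2^{i-W_n}$, because the existential $v$ pinpoints the unique column and the inner universal rules out any intermediate power of two reading the same column.

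Once these are established, correctness of $\pETiling{}$ is clerical. Clauses \eqref{eq:top-left-matches} and \eqref{eq:bottom-right-matches} fix the corners; \eqref{eq:horizontal-matching} walks through every pair of horizontally adjacent columns (using $\neg\pWidthMul{\bar u,2x,N}$ to exclude the wrap-around between the last tile of a row and the first tile of the next); and \eqref{eq:vertical-matching} compares the tile read at column $x$ with the tile read exactly $W_n$ positions later. The tiles themselves are read from $\bar w$ using $\pRead_m$. Thus $\sem{\pETiling{}}\neq\emptyset$ iff $(\T,\toplefttile,\bottomrighttile,n)$ admits a valid tiling of width $W_n=2^n$.

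The main bookkeeping obstacle is to certify that the whole sentence really is in the $\existsSymbol^*\forallSymbol^*\existsSymbol^*$ fragment after prenexing. I would argue compositionally: $\pBit{}$ and hence $\pRead{}$ and $\pMaxNum{}$ are purely $\existsSymbol^*$; $\pWidthMul{}$ is $\existsSymbol^*$, so $\neg\pWidthMul{}$ is $\forallSymbol^*$ and appears only on the left of an implication under a $\forallSymbol{x}$, contributing to the $\forallSymbol^*$ block; $\pWidth{}$ is $\existsSymbol^*\forallSymbol^*$, and its \emph{negated} occurrence (as the antecedent of an implication under $\forallSymbol{x}\forallSymbol{x'}$) is $\forallSymbol^*\existsSymbol^*$, which still folds cleanly into the $\forallSymbol^*\existsSymbol^*$ tail; and $\pIsARuler{}$ contributes $\existsSymbol^*\forallSymbol^*\existsSymbol^*$, matching the global prefix exactly. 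Merging the leading existentials of all conjuncts and pushing the inner witnesses (the $\exists t,t'$ on the right-hand side of each implication) under the common $\forallSymbol^*$ block gives the desired $\existsSymbol^*\forallSymbol^*\existsSymbol^*$ prenex form. The construction is clearly logarithmic-space computable, so we obtain a \logspace reduction from $\tilingProblem$ to the $\existsSymbol^*\forallSymbol^*\existsSymbol^*$ fragment of \Buchi arithmetic, yielding \expspace-hardness.
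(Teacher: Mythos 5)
Your proposal is correct and follows essentially the same approach as the paper, which establishes the proposition via the reduction from $\tilingProblem$ built in Section~\ref{sec:first-buchi-lower-bound} (the formulas $\pBit{}$, $\pRead{}$, $\pMaxNum{}$, $\pIsARuler{}$, $\pWidthMul{}$, $\pWidth{}$, and $\pETiling{}$ with the stated quantifier-alternation budgets). You are merely making explicit the correctness checks and prenexing steps that the paper leaves to the reader; both the ruler argument and the quantifier bookkeeping match the paper's intended argument.
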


    \subsection[A lower bound for the E*A*E*A* fragment]{A lower bound for the \texorpdfstring{\boldmath$\existsSymbol^*\forallSymbol^*\existsSymbol^*\forallSymbol^*$}{E*A*E*A*} fragment}
    \label{sec:second-buchi-lower-bound}
    Building upon the reduction given in the previous section, we now give a \expexpspace lower bound
for the $\existsSymbol^* \forallSymbol^* \existsSymbol^* \forallSymbol^*$ fragment of \Buchi arithmetic. 
We reduce from a modified variant of $\tilingProblem$ (\cref{problem:tiling-problem}), where we require the tiling width $W_n' \coloneqq 2^{W_n}= 2^{2^{n}}$; this is trivially a \expexpspace-hard problem since it allows for simulating a Turing machine with
a tape with a doubly exponential number of cells.
The reduction has the same general structure as that in \cref{sec:first-buchi-lower-bound}, and the only difference lies in the definition of formulas $\pWidthMul{}$, $\pWidth{}$, and $\pIsARuler{}$, which we substitute with $\pWidthMul^{\prime}{}$, $\pWidth^{\prime}{}$, and $\pIsARuler^{\prime}{}$, respectively, yielding a new formula $\pETiling^{\prime}{}$.
All three sub-formulas are given below. Observe that now $\pETiling^{\prime}{}$ has the quantifier pattern $\existsSymbol^* \forallSymbol^* \existsSymbol^* \forallSymbol^*$ if we provide:
\begin{itemize}
    \item $\pWidthMul^{\prime}{}$ with an $\existsSymbol^*$ quantifier prefix,
    \item $\pWidth^{\prime}{}$ with an $\existsSymbol^* \forallSymbol^* \existsSymbol^*$ quantifier prefix,
    \item $\pIsARuler^{\prime}{}$ with an $\existsSymbol^* \forallSymbol^* \existsSymbol^* \forallSymbol^*$ quantifier prefix.
\end{itemize}

We now give the definition of $\pIsARuler^{\prime}{}$. Intuitively, the ruler in this case takes the shape of a comb from \cref{subsec:word-encoding-of-tiling}:
\begin{align*}
    \pIsARuler^{\prime}{\bar{u}, F, W} \coloneqq {} & \pRead_n{\bar{u}, 2F, W-1} \land \pRead_n{\bar{u}, 1, W-1} \land {}\\
    &\forall{x}\forall{x'}\forall{v}\forall{v'}.\\
    &\,\bracket{2F \geq x > x' \geq 1 \land \pRead_n{\bar{u}, x, v} \land \pRead_n{\bar{u}, x', v'}} \limplies {}\\
    &\,\;\big[\exists{y}\exists{v_y}. x > y > x' \land \pRead_n{\bar{u}, y, v_y} \land v_y \geq \min\set{v, v'}\big] \lor {}\\
    &\,\;\big[\exists{y}\exists{v_y}. x > y > x' \land \pRead_n{\bar{u}, y, v_y} \land v_y = \min\set{v, v'}-1 \land {}\\
    &\,\;\forall{y'}\forall{v'_y}. x < y' < x' \land y' \neq y \land \pRead_n{\bar{u}, y', v'_y} \limplies v'_y < v_y\big] \lor {}\\
    &\,\;\squareBracket{v = 0 \lor v' = 0}
\end{align*}
Let $c$ be $\comb{W}$ without the first symbol, i.e., $c$ has length $W_n'$.
Intuitively, $\pIsARuler^{\prime}{}$ defines a periodic sequence $W_n, c^+$ on $\bar{u}$ observed through the lens of $\pRead_n{\bar{u}, x_i, \argumentDot}$ for consecutive positions $x_i$. We achieve this by requiring 
precisely one occurrence of the number ${\min\set{k, k'}-1}$ between numbers $k, k' \in \N$, unless $\min\set{k, k'} = 0$ or there is some $k'' \geq \min\set{k,k'}$ between them.
We can now simply put
\begin{align*}
    \pWidthMul^{\prime}{\bar{u}, x, W} \coloneqq {} & \pRead_n{\bar{u}, x, W}\,.
\end{align*}
Indeed, as the length of $c$ is precisely $W'_n$ and $c$ contains only one occurrence of $W_n$ at the rightmost position, positions marked with $W_n$ correspond to $2^{\N W'_n} \cap [1,2F]$.

To verify that the distance between $x$ and $y$ is exactly $W'_n$, we first make sure that $x$ and $y$ fall into two consecutive rows using $\pWidthMul^{\prime}{}$. If so, they are in the same position within a row, if the corresponding bits of the binary counter at $x$ and $y$ implicitly determined by $\bar{u}$ in line with \cref{obs:crucial-observation} are equal. This yields an $\existsSymbol^* \forallSymbol^* \existsSymbol^*$ formula:
\begin{align*}
    &\pWidth^{\prime}{\bar{u}, x, y, W} \coloneqq {}\\
    &\qquad\exists{p}\exists{q}\exists{r}.p > q > r \land {}\tag{A.1}\label{eq:segments-pqr-exist-a}\\
    &\qquad\pWidthMul^{\prime}{\bar{u}, p, W} \land \pWidthMul^{\prime}{\bar{u}, q, W} \land \pWidthMul^{\prime}{\bar{u}, r, W} \land {}\tag{A.2}\label{eq:segments-pqr-exist-b}\\
    &\qquad\squareBracket{\forall{s} (p > s > r \land \pWidthMul^{\prime}{\bar{u}, s, W}) \limplies s = q} \land {}\tag{A.3}\label{eq:segments-are-close}\\
    &\qquad p > x \geq q > y \geq r \land {}\tag{A.4}\label{eq:segments-contain-xy}\\
    &\qquad\forall{B}. (0 \leq B < W) \limplies {}\tag{B}\label{eq:for-all-constants}\\
    &\qquad\quad{\big[}\bracket{\forall{x'}. (p > x' \geq x) \limplies \neg \pRead{\bar{u}, x', B}}\land {}\tag{C.1}\label{eq:either-both-segments}\\
    &\qquad\quad\hphantom{\big[}\bracket{\forall{y'}. (q > y' \geq y) \limplies \neg \pRead{\bar{u}, y', B}}\big] \lor {}\tag{C.2}\label{eq:either-both-segments-do-not-see-B}\\
    &\qquad\quad\forall{x'}\forall{y'}. (\pRead_n{\bar{u}, x', B} \land \pRead_n{\bar{u}, y', B}) \limplies {}\tag{D}\label{eq:or-if-they-both-see-B}\\
    &\qquad\quad\quad\squareBracket{\exists{x''}. x' > x'' \geq x \land \pRead{\bar{u}, x'', B}} \quad \lor {}\tag{E.1}\label{eq:xx-is-chosen-suboptimally}\\
    &\qquad\quad\quad\squareBracket{\exists{y''}. y' > y'' \geq y \land \pRead{\bar{u}, y'', B}} \quad \lor {}\tag{E.2}\label{eq:yy-is-chosen-suboptimally}\\
    &\qquad\quad\quad\bigl[
    \bracket{\forall{x''}. x' > x'' \geq x \land \pP{x} \limplies \exists{v} \pRead{\bar{u}, x'', v} \land v < B} \land {}\tag{F.1}\label{eq:both-see-B}\\
    &\qquad\quad\quad\hphantom{\bigl[}
    \bracket{\forall{y''}. y' > y'' \geq y \land \pP{y} \limplies \exists{v} \pRead{\bar{u}, y'', v} \land v < B}\bigr] \quad \lor {}\tag{F.2}\label{eq:both-see-B-b}\\
    &\qquad\quad\quad\bigl[
    \bracket{\exists{x''}\exists{v}. x' > x'' \geq x \land \pRead{\bar{u}, x'', v} \land v > B} \land {}\tag{G.1}\label{eq:both-do-not-see-B}\\
    &\qquad\quad\quad\hphantom{\bigl[}
    \bracket{\exists{y''}\exists{v}. y' > y'' \geq y \land \pRead{\bar{u}, y'', v} \land v > B}\bigr]\,.\tag{G.2}\label{eq:both-do-not-see-B-b}
\end{align*}
Intuitively, we assert that $p,q,r$ delimit two consecutive rows in which $x$ and $y$ fall; cf.\ \crefrange{eq:segments-pqr-exist-a}{eq:segments-contain-xy}. 
The remainder of the formula, in \crefrange{eq:either-both-segments}{eq:both-do-not-see-B-b}, asserts that the $B$th bit of the (virtual) $2^n$-bit counter is equal for $x$ and $y$; with \cref{eq:for-all-constants}, we perform that test for every $B$.
In \cref{eq:either-both-segments,eq:either-both-segments-do-not-see-B} we test whether $B$ is absent in segments between $p$ and $x$, and between $q$ and $y$. If so, the $B$th bit is $0$ for both $x$ and $y$, and thus the $B$th bit equality test is successful. If it is the case only for one segment, $x$ and $y$ are not vertical neighbours, and the formula is false.
If in turn both segments contain $B$, we catch the cases where $x'$ or $y'$ are not the rightmost occurrences of $B$ in \cref{eq:xx-is-chosen-suboptimally,eq:yy-is-chosen-suboptimally}. If both $x'$ and $y'$ are rightmost occurrences, we require either both segments to consist only of values less than $B$; cf.\  \cref{eq:both-see-B,eq:both-see-B-b}, or both to contain a value greater than $B$; cf.\ \cref{eq:both-do-not-see-B,eq:both-do-not-see-B-b}. This makes sure that the value of the $B$th bit is equal for $x$ and $y$, and therefore, they are vertical neighbours. 

As before, the above construction can be generalised to \Buchi arithmetic with a $\pV_p{}$ predicate for any $p \ge 2$.
This yields:
\begin{prop}    
    \label{prop:BA-fragment-EXPSPACE-hard}
    Deciding the $\existsSymbol^* \forallSymbol^* \existsSymbol^* \forallSymbol^*$ fragment of \Buchi arithmetic is \expexpspace-hard.
\end{prop}

    \section{Conclusion}
    
In this paper, we studied the computational complexity of eliminating
universal quantifiers in automatic structures. We showed that, in
general, this is a computationally challenging problem whose
associated decision problem is \expspace-complete.
Our result further reinforces the intuition already stemming from~\cite{Kus09a} that, in general, the alternation of quantifiers requires ``complex'' automata. We also
used the technical construction underlying the \expspace lower bound to obtain new lower bounds for deciding
formulas of B\"uchi arithmetic with a fixed quantifier
alternation prefix.

It would be interesting to understand whether it is possible to
identify natural sufficient conditions on regular languages for which
a universal projection step does not result in a doubly exponential
blow-up and only leads to, e.g., polynomial or singly exponential
growth. Results of this kind have been obtained in model-theoretic terms for
structures of bounded degree~\cite{KL11,GH12}, but we are not aware of
a systematic study of questions of this kind on the level of regular
languages.

Finally, the lower bounds established in \Cref{sec:buchi-lower} for deciding formulas of B\"uchi arithmetic with fixed quantifier alternation prefixes are not tight. It can be seen that the upper
bounds, which can be derived from \Cref{sec:upper-bounds} for those
classes of formulas, are
off by one exponential. We leave it as an open problem
of this paper to fully characterise the complexity
of deciding B\"uchi arithmetic with an arbitrary but
fixed quantifier alternation prefix. A first step could be to
uniformise the constructions in \Cref{sec:buchi-lower} to enable
showing $k$-\expspace-hardness for any $k\ge 1$. At present, it
is not obvious to us how this could be achieved.

    \section*{Acknowledgements}
    We would like to thank the anonymous CONCUR'23 and LMCS reviewers for their comments and suggestions, which helped us to improve the presentation of this paper. We are grateful to Alessio Mansutti for spotting a bug in \Cref{sec:size-of-automaton}. This work is part of a project that has received funding from the European Research
Council (ERC) under the European Union’s Horizon 2020 research and innovation programme (Grant agreement No.\ 852769, ARiAT).
    
    \bibliographystyle{alphaurl}
    \bibliography{bibliography}
\end{document}